\documentclass[leqno,12pt]{article}

\usepackage{mathrsfs}  
\usepackage{amsmath}
\usepackage{amssymb}          
\usepackage{graphicx}               
\usepackage{latexsym}              
\usepackage{amsfonts}  
\usepackage{amsthm}

\newtheorem{theorem}{Theorem}      
\newtheorem{lemma}{Lemma}     
\newtheorem{definition}{Definition} 
\newtheorem{corollary}{Corollary}  

\newtheorem{remark}{Remark}
\DeclareMathOperator{\sech}{sech}

\newcommand{\f}{\frac}

\numberwithin{equation}{section}
\numberwithin{theorem}{section}
\numberwithin{lemma}{section}
\numberwithin{definition}{section}
\numberwithin{corollary}{section}

\begin{document}
\centerline{{\bf \large Sharp Upper Bound}}
\vspace{.1in}
\centerline{{\bf \large for Amplitudes of Finite-Gap Solutions}}
\vspace{.1in}
\centerline{{\bf \large of the Modified Korteweg-de Vries Equation}}

\vspace{.2cm}
\centerline{{Otis C. Wright, III}}
\centerline{{\em School of Science and Mathematics}}
\centerline{{\em Cedarville University}}
\centerline{{\em 251 N. Main St.}}
\centerline{{\em Cedarville, Ohio 45314}}
\centerline{E-mail: wrighto@cedarville.edu}
\centerline{{\today}}

\vspace{.2cm}

\centerline{{\bf \large Abstract}}
A direct proof based on commuting finite-dimensional flows and local polynomial invariants is given for a sharp upper bound on the amplitudes of finite-gap solutions of the modified Korteweg–de Vries (mKdV) equation. The maximal amplitude is  the sum of the imaginary parts of the upper-half-plane square roots of the roots of the invariant polynomial of the finite-gap solution of the focusing mKdV equation. An analogous formula is established for a bounded class of solutions of the defocusing mKdV equation. The upper bounds  are sharp and are explicitly attained by suitable initial data.


\section{Introduction}
\label{introduction}

The modified Korteweg-de Vries (mKdV) equation is
\begin{equation}
u_t + 6 \sigma u^2 u_x + u_{xxx} =  0,
\label{mkdv}
\end{equation}
where $u=u(x,t)$ is a real-valued wave profile and the parameter $\sigma = \pm 1$ dictates the physical nature of the wave dynamics. The focusing case ($\sigma=1$) supports localized soliton and breather solutions propagating under modulationally unstable conditions, while the defocusing case  ($\sigma=-1$) supports modulationally stable wave propagation.

In this paper, we show that for  finite-gap (\(N\)-phase) solutions of the focusing mKdV equation the maximizing configuration induces a natural factorization of the invariant polynomial \(\mathscr{R}(\lambda)\) on the double cover \(\lambda = E^2\). Let
$\mathcal E^+$ be the set of upper-half-plane square roots of the roots of the invariant polynomial $\mathscr{R}(\lambda).$ Then
\begin{equation}\label{sharp}
|u(x,t)|
\le
\sum_{E \in\mathcal E^+}\Im \left(E\right).
\end{equation}
The bound is sharp in the sense that there exists a solution that attains the upper bound. An analogous formula is established for a bounded class of solutions of the defocusing mKdV equation.

The mKdV equation is an essential mathematical model in nonlinear physics and fluid dynamics that governs the propagation of weakly dispersive, weakly nonlinear long waves when quadratic nonlinearities vanish or are sub-dominant~\cite{ablo 81, ablo 11}. Historically, the mKdV equation played a monumental role in the birth of modern soliton theory; its close relationship to the Korteweg-de Vries (KdV) equation via the celebrated Miura transformation~\cite{miur 68} directly inspired the discovery of the Inverse Scattering Transform (IST) by Gardner, Greene, Kruskal, and Miura~\cite{gard 67}. Beyond its deep theoretical value in integrability, the equation describes a variety of physical phenomena, including large-amplitude internal waves in stratified fluids~\cite{grim 97} and rogue waves~\cite{chen 18}. The finite-gap solutions of the mKdV equation represent quasiperiodic $N$-phase wavetrains whose complex spatial and temporal configurations are classically parametrized by Riemann theta functions associated with an invariant hyperelliptic Riemann surface~\cite{belo 94, gesz 03}. Under degenerate limits where the spectral band gaps close or coalesce, these general finite-gap solutions smoothly reduce to localized, physically striking structures such as solitons, breathers, and dispersive shock waves~\cite{chow 16, el 16}.

The analytical construction of finite-gap solutions to integrable nonlinear wave equations via commuting systems of ordinary differential equations was pioneered by Its and Kotlyarov~\cite{its 76, itsk 76, kotl 76}, and subsequently refined by Kamchatnov into an efficient, purely dynamical framework~\cite{kamc 00}. By utilizing the underlying Lax pair symmetries to bypass global transcendental machinery, this algebraic integration method effectively maps the wave dynamics onto local matrix polynomials. This framework was recently adapted to establish sharp upper bounds on the amplitudes of finite-gap solutions within three NLS-type hierarchies~\cite{wrig 19, wrig 20, wrig 24}.  A maximal amplitude formula for finite-gap solutions of the focusing NLS equation was also obtained by Bertola and Tovbis~\cite{bert 17} using algebro-geometric methods. In this paper, the finite-gap solutions of real mKdV equations are  studied on a loop algebra of $\mathfrak{sl}(2, \mathbb{C})$ with reality constraints and a shifted polynomial ansatz~\cite{gesz 03, gesz 00, bern 25}. 

The modified Korteweg–de Vries equation is closely connected to the sine-Gordon equation through the combined  sine-Gordon/mKdV hierarchy~\cite{gesz 03, gesz 00}. The finite-dimensional construction developed here extends naturally to the sine–Gordon equation through a different hierarchy reduction. The corresponding sharp density bounds will be investigated elsewhere.

A key conceptual advancement in the present work lies in a major simplification of the sharpness proof. In previous applications of the commuting ODE framework to  nonlinear Schrödinger hierarchies~\cite{wrig 19, wrig 20, wrig 24}, global algebro-geometric theta functions and compact tori  were invoked to guarantee that every individual continuous solution would attain its maximum. The current paper demonstrates that proving global compactness for every trajectory is not necessary for the determination of the sharp amplitude bound. Instead of requiring all solutions to achieve their peaks, we show that it is sufficient to establish the existence of a single solution that realizes the absolute supremum of all bounded solutions as its maximum. By showing that this supremum of the permissible initial data is explicitly attained by a constructible initial data point, we achieve a considerable simplification compared to the global topological arguments used previously.

The remainder of this paper is organized as follows. Section 2 establishes a concrete algebraic construction of finite-gap mKdV solutions using dynamical variables governed by autonomous, commuting systems of ordinary differential equations defined on a  loop algebra of $\mathfrak{sl}(2,\mathbb{C})$. Section 3 details the direct analysis at the level of the dynamical variables of the critical points of these commuting flows, culminating in the derivation of the sharp maximal amplitude formulas for the mKdV equation~(\ref{mkdv}). Section 4 presents degenerate one-gap limits that connect the sharp amplitude formulas with the classical soliton and kink solutions.

\section{Finite-Gap Solutions}

In this work the real-valued mKdV equation~\eqref{mkdv} is studied within a loop algebra of $\mathfrak{sl} (2,\mathbb{C})$ with reality constraints and a shifted polynomial ansatz. Under this ansatz, the wave profile $u(x,t)$ is associated directly with the diagonal entry of a generating matrix Laurent polynomial $\Psi^{(N)}.$ The Laurent polynomials used to define the commuting hierarchy flows satisfy a shifted ansatz with respect to the spectral parameter. A nonhomogeneous splitting of the loop algebra is used to ensure that the flows are well-defined and commute. 

The finite-gap solutions are constructed from solutions of $N$ nonlinear autonomous ordinary differential equations generated by $3N$ dependent complex dynamical variables. The invariant reality conditions reduce the phase space to $3N$ real parameters~\cite{bern 25}. These $N$ ordinary differential equations commute with each other and possess a common set of invariants, allowing the compatible local solution to extend to a compatible global solution. In earlier treatments of the NLS, derivative NLS, and modified NLS hierarchies, the sharp upper amplitude bounds were established by maximizing the compatible global solution over a compact $N$-dimensional torus where continuous trajectories were guaranteed to achieve a maximum~\cite{wrig 19, wrig 20, wrig 24}. By contrast, the current paper avoids the need to show that every bounded trajectory attains its maximum, defining the sharp upper bound directly as the supremum over the class of real bounded finite-gap initial data. This supremum is explicitly achieved by the amplitude of a solution of the mKdV equation for specific initial conditions.

\begin{definition}[Loop Algebra and Splitting Operators]
Let $\mathfrak{g} = \mathfrak{sl}(2,\mathbb{C}) \otimes \mathbb{C} [\lambda, \lambda^{-1}],$ the loop algebra of Laurent polynomials in the spectral parameter $\lambda,$ and let $\Phi \in \mathfrak{g},$ viz.,
\begin{equation}
\Phi = \sum\limits^{M_{+}}_{j=M_{-}} A_j \lambda^j,
\end{equation}
where $A_j \in \mathfrak{sl}(2,\mathbb{C}),$ for $j \in \mathbb{Z}.$ 
Write 
\begin{equation}
A_j = \left(\begin{array}{cc} - a_j & b_j \\ c_j & a_j \end{array}\right),
\end{equation}
for $j=M_{-}, \ldots, M_{+}.$
For any $\Phi \in \mathfrak{g},$ define the splitting operators
\begin{equation}
(\Phi)_+ = \sum\limits^{M_+}_{j=0} A_j \lambda^j - \left(\begin{array}{cc} 0 & 0 \\ c_0 & 0 \end{array} \right)
\end{equation}
and
\begin{equation}
(\Phi)_- = \sum\limits^{-1}_{j=M_{-}} A_j \lambda^j  + \left(\begin{array}{cc} 0 & 0 \\ c_0 & 0 \end{array} \right).
\end{equation}

\end{definition}

The splitting operators project $\mathfrak{g}$ onto two Lie subalgebras $\mathfrak{g}_+$ and $\mathfrak{g}_-,$ creating a splitting of the loop algebra
\begin{equation}
\mathfrak{g} = \mathfrak{g}_+ \oplus \mathfrak{g}_{-}
\end{equation}
that enables the construction of the commuting hierarchy flows.
The correction term involving \(c_0\) preserves the closure of both \(\mathfrak g_+\) and \(\mathfrak g_-\) under the commutator bracket, and  it also ensures that the commuting hierarchy flows defined later, using a shifted polynomial ansatz, close on the set of dynamical variables.  This construction is closely related to the Adler-Kostant-Symes framework for integrable hierarchies~\cite{adle 78, kost 79, syme 80}
and its loop-algebra generalizations~\cite{drin 85, kac 90}. An explicit construction of the finite-gap solutions based on this framework appears in the work of Bernatska~\cite{bern 25}. 

The splitting operators commute with an involution of the Lie algebra that we use later to express the real focusing mKdV equation reduction of the phase space.

\begin{lemma}[Commutation of the Splitting and Involution] \label{involutionlemma}
Define the involution $\tau: \mathfrak g \rightarrow \mathfrak g$ by
\begin{equation}
\left( \tau \Phi \right) (\lambda) = J(\lambda) \Phi (\lambda)^T J^{-1} (\lambda),
\end{equation}
where $(\cdot)^T$ is the matrix transpose and 
\begin{equation}
J(\lambda) = \left(\begin{array}{cc} 1 & 0 \\ 0 & \lambda \end{array} \right).
\end{equation}
Then the involution $\tau$ commutes with the projection operators, viz., 
\begin{equation}
\tau \left( \Phi \right)_{\pm}  = \left(   \tau \Phi  \right)_{\pm}. 
\end{equation}
\end{lemma}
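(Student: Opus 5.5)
The plan is a direct coefficient computation. Both $\tau$ and the splitting operators are $\mathbb{C}$-linear, and $(\Phi)_+ + (\Phi)_- = \Phi$ for every $\Phi\in\mathfrak g$. So it suffices to prove $\tau(\Phi)_+ = (\tau\Phi)_+$. The minus identity then follows from
\begin{equation*}
\tau(\Phi)_- = \tau\Phi - \tau(\Phi)_+ = \tau\Phi - (\tau\Phi)_+ = (\tau\Phi)_- .
\end{equation*}

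First I will compute $\tau$ on a single monomial $A_j\lambda^j$. We have $A_j^T = \left(\begin{array}{cc} -a_j & c_j \\ b_j & a_j\end{array}\right)$. Conjugating by $J(\lambda)=\mathrm{diag}(1,\lambda)$ leaves the diagonal unchanged, multiplies the $(1,2)$ entry by $\lambda^{-1}$, and multiplies the $(2,1)$ entry by $\lambda$. Hence
\begin{equation*}
\tau(A_j\lambda^j) = \left(\begin{array}{cc} -a_j\lambda^j & c_j\lambda^{j-1} \\ b_j\lambda^{j+1} & a_j\lambda^j\end{array}\right).
\end{equation*}
This shows that $\tau$ maps $\mathfrak g$ into itself, since the result is again a traceless Laurent polynomial. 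It also shows that $\tau$ is an involution, because $J^{-1}$ undoes $J$ after the second transpose. Collecting powers of $\lambda$, the coefficient matrices $A'_j$ of $\tau\Phi$ have entries
\begin{equation*}
a'_j = a_j,\qquad b'_j = c_{j+1},\qquad c'_j = b_{j-1}.
\end{equation*}

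Next I will record, entry by entry, which coefficients survive in each side.

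For $(\Phi)_+$, the surviving coefficients are $a_j$ with $j\ge 0$, $b_j$ with $j\ge 0$, and $c_j$ with $j\ge 1$; the $c_0$ term has been removed. Applying the index-shift rule above to these, $\tau(\Phi)_+$ contains $a_j$ for $j\ge0$, $b'_j=c_{j+1}$ for $j+1\ge1$ (that is, $j\ge0$), and $c'_j=b_{j-1}$ for $j-1\ge0$ (that is, $j\ge1$).

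On the other side, $(\tau\Phi)_+$ keeps $a'_j$ and $b'_j$ for $j\ge0$ and $c'_j$ for $j\ge1$. These are exactly the same coefficients in the same positions, so the two sides agree.

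The only point requiring care is the boundary bookkeeping. The $c_0$ correction in the splitting is precisely what is needed here, because $\tau$ shifts the $b$-index down by one and the $c$-index up by one. Concretely, $c_0$ would become $b'_{-1}$, which belongs in $\mathfrak g_-$, and $b_{-1}$ becomes $c'_0$, which is likewise excluded from the plus part. I expect this index check to be the only real obstacle, and it is resolved by the explicit matching above.
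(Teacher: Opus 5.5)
Your proof is correct and is essentially the paper's argument: both rest on the index shift $b'_j=c_{j+1}$, $c'_j=b_{j-1}$ induced by $\tau$, and both deduce the minus case from $(\Phi)_-=\Phi-(\Phi)_+$. The paper packages the same bookkeeping differently, writing $(\Phi)_+=(\Phi)_0-\left(\begin{array}{cc}0&0\\c_0&0\end{array}\right)$ with the homogeneous projection $(\cdot)_0$ and checking that the boundary terms $c_0\lambda^{-1}$ and $b_{-1}$ produced by conjugation with $J$ cancel against this correction, which is exactly your observation about $b'_{-1}$ and $c'_0$.
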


\begin{proof}
For any $\Phi \in \mathfrak{g},$ let $(\cdot)_{0}$ be the usual homogeneous splitting operator,
\begin{equation}
\left( \Phi \right)_{0} = \sum\limits^{M_+}_{j=0} A_j \lambda^j.
\end{equation}
The splitting operator $( \cdot )_{0}$ commutes with the transpose and, also, we can write
\begin{equation}
\left( \Phi \right)_{+} = \left( \Phi \right)_{0} - \left( \begin{array}{cc} 0 & 0 \\ c_0 & 0 \end{array} \right).
\end{equation}
Then
\begin{equation}
\begin{array}{rcl}
 \tau \left( \Phi \right)_{+}  & = &   \tau  \left( \Phi \right)_0 - \tau \left( \begin{array}{cc} 0 & 0 \\ c_0 & 0 \end{array} \right) \\
& = & J \left( \Phi^T \right)_0 J^{-1} - \left(\begin{array}{cc} 0 & c_0 \lambda^{-1} \\ 0 & 0 \end{array} \right) \\
& = & \left( J \Phi^T J^{-1} \right)_{0} + \left(\begin{array}{cc} 0 & c_0 \lambda^{-1} \\ - b_{-1} & 0 \end{array} \right) - \left(\begin{array}{cc} 0 & c_0 \lambda^{-1} \\ 0 & 0 \end{array} \right) \\
& =&  \left( J \Phi^T J^{-1} \right)_{0} - \left(\begin{array}{cc} 0 & 0  \\  b_{-1} & 0 \end{array} \right) \\
& = & \left( J \Phi^T J^{-1} \right)_{+} \\
& = & \left( \tau \Phi  \right)_{+},
\end{array}
\end{equation}
since $b_{-1}$ becomes the lower-left coefficient of degree zero in the matrix $J \Phi^T J^{-1}.$
Consequently, 
\begin{equation}
\tau \left(\Phi \right)_{-}  = \tau  \Phi - \tau \left( \Phi \right)_+ = \tau \Phi  - \left( \tau \Phi \right)_+ =  \left( \tau \Phi  \right)_{-}.
\end{equation}
\end{proof}

\begin{corollary}[Commutation of the Splitting and Derived Involution]
Define the involution $\rho: \mathfrak g \rightarrow \mathfrak g$ by
\begin{equation}
(\rho\Phi)(\lambda) =-\overline{(\tau\Phi)(\overline{\lambda})}.
\end{equation}
Then the involution $\rho$ commutes with the projection operators, viz.,
\begin{equation}
\rho (\Phi)_+ = \bigl(\rho \Phi \bigr)_+,
\qquad
\rho (\Phi)_-
=
\bigl(\rho \Phi \bigr)_-.
\end{equation}
\end{corollary}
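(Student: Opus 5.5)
The plan is to write $\rho$ as a composition of $\tau$ with two further operations, and check that each of them commutes with the splitting operators $(\cdot)_\pm$. Define the conjugation map $\kappa:\mathfrak g\to\mathfrak g$ by
\begin{equation*}
(\kappa\Phi)(\lambda)=\overline{\Phi(\overline{\lambda})}.
\end{equation*}
Then $\rho\Phi=-\kappa(\tau\Phi)$. Lemma~\ref{involutionlemma} already gives $\tau(\Phi)_\pm=(\tau\Phi)_\pm$. It therefore suffices to prove two facts: that $\kappa$ commutes with $(\cdot)_\pm$, and that multiplication by $-1$ commutes with $(\cdot)_\pm$.

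For $\kappa$, write $\Phi=\sum_j A_j\lambda^j$. Then
\begin{equation*}
\kappa\Phi=\sum_j \overline{A_j}\,\lambda^j.
\end{equation*}
So $\kappa$ acts coefficientwise by complex conjugation and does not change any power of $\lambda$. In particular the degree-zero lower-left entry of $\kappa\Phi$ is $\overline{c_0}$. Consequently
\begin{equation*}
(\kappa\Phi)_+=\sum_{j\ge 0}\overline{A_j}\lambda^j-\left(\begin{array}{cc}0&0\\ \overline{c_0}&0\end{array}\right)=\kappa\bigl((\Phi)_+\bigr).
\end{equation*}
The analogous computation gives $(\kappa\Phi)_-=\kappa\bigl((\Phi)_-\bigr)$.

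Multiplication by $-1$ commutes with $(\cdot)_\pm$ because both projections are $\mathbb{C}$-linear. Note that $\kappa$ is only $\mathbb{R}$-linear, but additivity and real scalars are all that is needed for the $-1$ step.

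Combining these facts:
\begin{equation*}
\rho(\Phi)_+=-\kappa\,\tau(\Phi)_+=-\kappa(\tau\Phi)_+=-(\kappa\tau\Phi)_+=(-\kappa\tau\Phi)_+=(\rho\Phi)_+.
\end{equation*}
For the minus projection one can repeat the same chain. Alternatively, since $\rho$ is additive and $(\Phi)_-=\Phi-(\Phi)_+$,
\begin{equation*}
\rho(\Phi)_-=\rho\Phi-\rho(\Phi)_+=\rho\Phi-(\rho\Phi)_+=(\rho\Phi)_-.
\end{equation*}
This mirrors the final step in the proof of Lemma~\ref{involutionlemma}.

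The only point needing care is that $\kappa$ is antilinear, so one must check that each splitting operator is defined by taking coefficients and is therefore compatible with conjugation. Conjugating the coefficient $c_0$ leaves it in the same matrix slot and at the same degree in $\lambda$, because $\overline{\lambda}$ is conjugated back. Beyond that check the argument is routine.

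A complete treatment would also verify that $\rho$ is an involution, as the statement says. This follows because $\kappa$ and $\tau$ commute and each squares to the identity. For $\tau^2=\mathrm{id}$, note $J(\lambda)^{-1}J(\lambda)^T=I$ since $J$ is diagonal. For the commutation $\kappa\tau=\tau\kappa$, note that $J$ has real coefficients.
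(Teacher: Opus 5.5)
Your proof is correct and takes essentially the same route as the paper: both invoke Lemma~\ref{involutionlemma} for $\tau$ and then use the fact that coefficientwise complex conjugation (together with the sign) commutes with $(\cdot)_\pm$, where you additionally write out the $c_0\mapsto\overline{c_0}$ check and verify $\rho^2=\mathrm{id}$. The paper handles $(\cdot)_+$ and $(\cdot)_-$ in one step rather than deducing the minus case by subtraction, but this difference is only cosmetic.
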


\begin{proof}
By Lemma~\ref{involutionlemma}
\begin{equation}
\tau (\Phi)_\pm
=
\bigl(\tau \Phi \bigr)_\pm .
\end{equation}
Since complex conjugation commutes with the projection operators,
\begin{equation}
-\overline{
\left(\tau \left( \Phi \right)_{\pm} \right) (\overline{\lambda}) 
}
=
-\overline{
\bigl( \left( \tau \Phi \right)(\overline{\lambda} )\bigr)_\pm
}
=
\left(
-\overline{\left( \tau \Phi \right) (\overline{\lambda})}
\right)_\pm .
\end{equation}
Hence
\begin{equation}
\rho (\Phi)_\pm 
=
\bigl(\rho \Phi \bigr)_\pm .
\end{equation}
\end{proof}

\begin{definition}[Dynamical Variables] For $N \in \mathbb{N},$ consider $3N$ complex-valued variables $f_j,g_j, h_j \in \mathbb{C},$ for $j=0, \ldots, N-1.$ Define three polynomials
\begin{equation}
F_{N-1}(\lambda) =  \sum\limits_{j=0}^{N-1} f_j \lambda^j
\end{equation}
\begin{equation}
G_{N}(\lambda) = \sum\limits_{j=0}^{N} g_j \lambda^j,
\end{equation}
\begin{equation}
H_{N} (\lambda) = \sum\limits_{j=0}^{N} h_j \lambda^j,
\end{equation}
where $g_{N} = h_{N} = b \in \mathbb{C}$ is constant.
Then define the generating matrix
\begin{equation}
\Psi^{(N)} (\lambda) = \left(\begin{array}{cc} -F_{N-1}(\lambda) & G_{N}(\lambda) \lambda^{-1} \\ H_{N} (\lambda) & F_{N-1}(\lambda) \end{array} \right),
\end{equation}
where  $\Psi^{(N)} (\lambda) \in \mathfrak{g} = \mathfrak{sl}(2,\mathbb{C}) \otimes \mathbb{C}[\lambda, \lambda^{-1}].$
\end{definition}

The generating matrix respects the parity of the diagonal and off-diagonal entries of an element of $\mathfrak{g}$ but shifts the degrees of the off-diagonal elements relative to each other. The relative shift of the off-diagonal polynomials is crucial to the construction of  the real mKdV hierarchy and is the reason for the correction term in the projection operators.

\begin{definition}[Time Flow Matrices]
Define the hierarchy flow matrices
\begin{equation}
V^{(k)}  = ( \lambda^{k-N+1} \Psi^{(N)} )_+,
\end{equation}
for $k=0, \ldots, N-1.$  
\end{definition}

Now we define the flows on the dynamical variables.

\begin{definition}[Nonlinear Autonomous Systems]
Define $N$ nonlinear autonomous ordinary differential equations with $N$ independent time variables $t_k \in \mathbb{R},$ for $k=0, \ldots, N-1,$ and $3N$ dependent complex variables
given by the components of $\mathbf{y} \in \mathbb{C}^{3N},$ viz., $y_j=f_{j-1}, y_{N+j}= g_{j-1}, y_{2N+j}= h_{j-1},$ for $j=1, \ldots, N,$  by
\begin{equation}
\frac{\partial}{\partial t_k} \mathbf{y} = \mathbf{F} (\mathbf{y}),
\label{Fode}
\end{equation}
where  
the components of $\mathbf{F} : \mathbb{C}^{3N} \rightarrow \mathbb{C}^{3N}$ are  polynomials  in the $3N$ dependent variables, $\mathbf{y} \in \mathbb{C}^{3N},$ determined by equating the coefficients of powers of $\lambda$ in the equations
\begin{equation}
\f{\partial \Psi^{(N)}}{\partial t_k} = \left[V^{(k)},\Psi^{(N)} \right],
\label{ode}
\end{equation}
where 
\begin{equation}
[A,B] = AB - BA
\end{equation}
is the usual commutator operator on arbitrary matrices $A, B \in \mathfrak{sl}(2,\mathbb{C}).$
\end{definition}

The hierarchy equations in~(\ref{ode}) define a closed finite-dimensional dynamical system on the Laurent-polynomial ansatz \(\Psi^{(N)}\). The relative shift in the off-diagonal elements and the projection operator act together to close the recursion in powers of $\lambda.$ 
In particular, the generating matrix $\Psi^{(N)}$ has the Laurent expansion
\begin{equation}\label{psiexpansion}
\Psi^{(N)}= \left(\begin{array}{cc} 0 & 0 \\ b & 0 \end{array} \right) \lambda^N + \left(\begin{array}{cc} -f_{N-1} & b \\ h_{N-1} & f_{N-1} \end{array} \right) \lambda^{N-1} + \cdots + \left( \begin{array}{cc} -f_0 & g_1 \\ h_0 & f_0 \end{array}\right) + \left(\begin{array}{cc} 0 & g_0 \\ 0 & 0 \end{array}\right) \lambda^{-1},
\end{equation}
and 
\begin{equation}\label{Vpsiexpansion}
\begin{array}{rcl}
\left[ V^{(k)}, \Psi^{(N)} \right] &= & \left[ \left(\begin{array}{cc} 0  & 0 \\ b & 0 \end{array}\right) \lambda^{k+1} + \left(\begin{array}{cc} - f_{N-1} & b \\ h_{N-1} & f_{N-1} \end{array} \right) \lambda^k + \ldots+ \left(\begin{array}{cc} -f_{N-k-1} & g_{N-k} \\ 0 & f_{N-k-1} \end{array} \right),\right. \\[.2in]
& & \left. \left(\begin{array}{cc} 0 & 0 \\ h_{N-k-1} & 0 \end{array}\right) \lambda^{N-k-1} + \left(\begin{array}{cc} -f_{N-k-2} & g_{N-k-1} \\ h_{N-k-2} & f_{N-k-2} \end{array}\right) \lambda^{N-k-2} + \ldots \right. \\[.2in]
&& \left. + \left( \begin{array}{cc} 0 & g_0 \\ 0 & 0 \end{array} \right) \lambda^{-1} \right] \\[.2in]
& = &\left(\begin{array}{cc}-b g_{N-k-1}+b h_{N-k-1} &  0 \\ -2 b f_{N-k-2} +2 f_{N-1} h_{N-k-1} & b g_{N-k-1} - b h_{N-k-1} \end{array}\right) \lambda^{N-1} + \\[.2in]
& & M_{N-2} \lambda^{N-2}+ \cdots + M_0 + \left(\begin{array}{cc} 0 & -2 f_{N-k-1} g_0 \\ 0 & 0 \end{array}\right) \lambda^{-1},
\end{array}
\end{equation}
where $M_{0}, \ldots, M_{N-2} \in \mathfrak{sl} (2,\mathbb{C})$ have linear or quadratic entries in the dynamical variables. By equating coefficients of equal powers of \(\lambda\) and comparing matrix entries, we obtain a closed autonomous system on the dynamical variables.
The projection operator $(\cdot)_+$ forces the lowest order matrix in the expansion of $V^{(k)}$ to be upper triangular. This projection mimics the relative shift of the loop algebra and is crucial to closing the recursion relations on the dynamical variables.

The splitting of the loop algebra $\mathfrak{g} = \mathfrak{g}_+ \oplus \mathfrak{g}_{-}$ and the definition of the time flow matrices as projections of the generating matrix $\Psi^{(N)}$ produce the following zero-curvature identity.

\begin{lemma}[Zero-Curvature Identity~\cite{dick 03}]
The $N$ matrix polynomials $V^{(k)},$ for $k = 0, \ldots, N-1,$ satisfy the identity
\begin{equation}
\frac{\partial }{\partial t_\ell} V^{(k)} - \frac{\partial }{\partial t_k} V^{(\ell)}= [V^{(\ell)}, V^{(k)}],
\label{Videntity}
\end{equation}
for $k, \ell = 0, \ldots, N-1.$
\end{lemma}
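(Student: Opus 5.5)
We prove the identity by the standard Adler--Kostant--Symes argument. It needs only three facts: the flows are Lax equations $\partial_{t_k}\Psi^{(N)}=[V^{(k)},\Psi^{(N)}]$ from (\ref{ode}); the projections $(\cdot)_\pm$ are linear, commute with $\partial_{t}$, and satisfy $(\cdot)_++(\cdot)_-=\mathrm{id}$; and $\mathfrak g_\pm$ are Lie subalgebras, which was noted after the definition of the splitting. Throughout, write $L_k=\lambda^{k-N+1}\Psi^{(N)}$, so that $V^{(k)}=(L_k)_+$. Since $\lambda^{k-N+1}$ is a scalar independent of the times, each $L_k$ obeys the same Lax equation as $\Psi^{(N)}$:
\begin{equation*}
\partial_{t_\ell}L_k=[V^{(\ell)},L_k].
\end{equation*}
Moreover $[L_k,L_\ell]=\lambda^{k+\ell-2N+2}[\Psi^{(N)},\Psi^{(N)}]=0$.

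\textbf{Step 1.} Compute the derivatives of the flow matrices. Because the projection is linear and time-independent,
\begin{equation*}
\partial_{t_\ell}V^{(k)}=\bigl([V^{(\ell)},L_k]\bigr)_+,\qquad \partial_{t_k}V^{(\ell)}=\bigl([V^{(k)},L_\ell]\bigr)_+ .
\end{equation*}

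\textbf{Step 2.} Split each $L$ as $L=(L)_++(L)_-$. Then
\begin{equation*}
[V^{(\ell)},L_k]=[V^{(\ell)},V^{(k)}]+[V^{(\ell)},(L_k)_-].
\end{equation*}
Also $[V^{(k)},L_\ell]=[L_k,L_\ell]-[(L_k)_-,L_\ell]=-[(L_k)_-,L_\ell]$, and expanding $L_\ell=V^{(\ell)}+(L_\ell)_-$ gives
\begin{equation*}
[V^{(k)},L_\ell]=[V^{(\ell)},(L_k)_-]-[(L_k)_-,(L_\ell)_-].
\end{equation*}
Subtracting the two expressions,
\begin{equation*}
[V^{(\ell)},L_k]-[V^{(k)},L_\ell]=[V^{(\ell)},V^{(k)}]+[(L_k)_-,(L_\ell)_-].
\end{equation*}

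\textbf{Step 3.} Apply $(\cdot)_+$ to this difference. The first term lies in $\mathfrak g_+$ because $\mathfrak g_+$ is a subalgebra. The second lies in $\mathfrak g_-$ because $\mathfrak g_-$ is a subalgebra, so its $+$-projection vanishes. Hence
\begin{equation*}
\partial_{t_\ell}V^{(k)}-\partial_{t_k}V^{(\ell)}=[V^{(\ell)},V^{(k)}],
\end{equation*}
which is (\ref{Videntity}).

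The main obstacle is verifying that $\mathfrak g_+$ and $\mathfrak g_-$ really are subalgebras under the nonhomogeneous splitting; the paper asserted this but did not check it. Here $\mathfrak g_+$ consists of polynomials in $\lambda$ whose constant term is upper triangular. A product of two such matrices has an upper-triangular constant term, and commutators stay in $\mathfrak{sl}(2)$, so $\mathfrak g_+$ is closed. $\mathfrak g_-$ consists of strictly negative powers plus a strictly lower-triangular constant term. The constant term of a commutator gets two contributions: a bracket of two strictly lower-triangular matrices, which is zero, and products of negative-power terms, which contribute only negative powers. So $\mathfrak g_-$ is closed too. Because $\mathfrak g_+\cap\mathfrak g_-=\{0\}$ and the two projections sum to the identity, the projection argument in Step 3 is legitimate. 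We also use implicitly that the mixed flows are evaluated along a common solution, so that each $t_\ell$-derivative of $\Psi^{(N)}$ is given by its Lax equation.
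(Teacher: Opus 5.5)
Your proof is correct and follows the same Adler--Kostant--Symes computation as the paper: both write $\partial_{t_\ell}V^{(k)}=\bigl([V^{(\ell)},\lambda^{k-N+1}\Psi^{(N)}]\bigr)_+$, use that the scaled copies of $\Psi^{(N)}$ commute, and discard the bracket of the two $(\cdot)_-$ parts under $(\cdot)_+$. The only difference is that you regroup the terms before projecting rather than after, and your explicit check that $\mathfrak g_+$ and $\mathfrak g_-$ are closed under the bracket for the nonhomogeneous splitting is a worthwhile addition, since the paper only asserts it.
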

\begin{proof}
\begin{equation}
\begin{array}{rcl}
\frac{\partial}{\partial t_\ell} V^{(k)} - \frac{\partial}{\partial t_k}  V^{(\ell)} & = & \frac{\partial}{\partial t_\ell} \left(\lambda^{k-N+1} \Psi^{(N)} \right)_{+}  - \frac{\partial}{\partial t_k} \left(\lambda^{\ell - N+1} \Psi^{(N)} \right)_+ \\[.1in]
& = &\left(\lambda^{k-N+1} \frac{\partial}{\partial t_\ell} \Psi^{(N)} \right)_+ - \left(\lambda^{\ell -N+1} \frac{\partial}{\partial t_k} \Psi^{(N)} \right)_+ \\[.1in]
& = &\left(\lambda^{k-N+1} [V^{(\ell)}, \Psi^{(N)}]\right)_+ - \left(\lambda^{\ell - N+1} [ V^{(k)}, \Psi^{(N)}]\right)_+ \\[.1in]
& = & \left([V^{(\ell)}, \lambda^{k-N+1} \Psi^{(N)}] \right)_+ - \left([V^{(k)}, \lambda^{\ell - N+1} \Psi^{(N)}] \right)_+ \\[.1in]
& = & \left([(\lambda^{\ell - N+1} \Psi^{(N)})_+, \lambda^{k-N+1} \Psi^{(N)}]\right)_+ - \\[.1in]
&&  \left([(\lambda^{k-N+1} \Psi^{(N)})_+, \lambda^{\ell - N+1} \Psi^{(N)}]\right)_+ \\[.1in]
& = & \left([\lambda^{\ell - N+1} \Psi^{(N)} - (\lambda^{\ell - N+1} \Psi^{(N)})_-, \lambda^{k-N+1} \Psi^{(N)}] \right)_+ - \\[.1in]
&&   \left([(\lambda^{k-N+1} \Psi^{(N)})_+, \lambda^{\ell - N+1} \Psi^{(N)}]\right)_+ \\[.1in]
& = & \left([\lambda^{k-N+1} \Psi^{(N)}, (\lambda^{\ell - N+1} \Psi^{(N)})_-]\right)_+ - \\[.1in]
& & \left([(\lambda^{k-N+1} \Psi^{(N)})_+, \lambda^{\ell - N+1} \Psi^{(N)}] \right)_+ \\[.1in]
& = & \left([(\lambda^{k-N+1} \Psi^{(N)})_+, (\lambda^{\ell - N+1} \Psi^{(N)})_-]\right)_+ - \\[.1in]
&& \left([(\lambda^{k-N+1} \Psi^{(N)})_+, \lambda^{\ell - N+1} \Psi^{(N)}]\right)_+ \\[.1in]
& = & \left([ (\lambda^{k-N+1} \Psi^{(N)})_+, (\lambda^{\ell - N+1} \Psi^{(N)})_- - \lambda^{\ell - N+1} \Psi^{(N)}]\right)_+ \\[.1in]
& = & \left[(\lambda^{\ell - N+1} \Psi^{(N)})_+, (\lambda^{k-N+1} \Psi^{(N)})_+ \right] \\[.1in]
& = & \left[V^{(\ell)}, V^{(k)}\right].
\end{array}
\end{equation}
\end{proof}

Using the zero-curvature  identity~(\ref{Videntity}), the  definition of the time flows guarantees that the time flows commute in a classical sense. For ease of notation, define 
$
\partial_k = \frac{\partial}{\partial t_k}$ and $\partial_\ell = \frac{\partial}{\partial t_\ell}.$

\begin{lemma}[Commuting Differential Equations~\cite{dick 03}]
The $N$ nonlinear autonomous ordinary differential equations defined by equation~(\ref{ode}) commute with each other, viz.,
\begin{equation}
\partial_k (\partial_\ell \Psi^{(N)}) = \partial_\ell (\partial_k \Psi^{(N)}),
\end{equation}
for $k, \ell = 0, 1, \ldots, N-1.$
\end{lemma}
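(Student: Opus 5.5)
The plan is to differentiate the Lax equation~(\ref{ode}) along a second flow and use the zero-curvature identity~(\ref{Videntity}) together with the Jacobi identity. The mixed derivatives of $\Psi^{(N)}$ are then the same expression with the indices $k$ and $\ell$ interchanged.

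The first step is to compute $\partial_k(\partial_\ell \Psi^{(N)}) = \partial_k [V^{(\ell)},\Psi^{(N)}]$. By the product rule for commutators this equals $[\partial_k V^{(\ell)},\Psi^{(N)}] + [V^{(\ell)},\partial_k \Psi^{(N)}]$. Substituting $\partial_k \Psi^{(N)} = [V^{(k)},\Psi^{(N)}]$ from~(\ref{ode}) gives
\begin{equation*}
\partial_k(\partial_\ell \Psi^{(N)}) = [\partial_k V^{(\ell)},\Psi^{(N)}] + [V^{(\ell)},[V^{(k)},\Psi^{(N)}]],
\end{equation*}
and symmetrically
\begin{equation*}
\partial_\ell(\partial_k \Psi^{(N)}) = [\partial_\ell V^{(k)},\Psi^{(N)}] + [V^{(k)},[V^{(\ell)},\Psi^{(N)}]].
\end{equation*}
One subtlety should be noted. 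The derivative $\partial_k V^{(\ell)}$ is taken along the $t_k$-flow of the dynamical variables. It is well defined because $V^{(\ell)}$ is a projection of $\Psi^{(N)}$, which depends linearly on $\mathbf{y}$. This is also the sense in which the preceding lemma computes $\partial_k V^{(\ell)}$ by applying $(\cdot)_+$ to $\lambda^{\ell-N+1}\partial_k\Psi^{(N)}$.

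Next we subtract the two expressions. The difference is
\begin{equation*}
[\partial_k V^{(\ell)} - \partial_\ell V^{(k)},\Psi^{(N)}] + [V^{(\ell)},[V^{(k)},\Psi^{(N)}]] - [V^{(k)},[V^{(\ell)},\Psi^{(N)}]].
\end{equation*}
By the Jacobi identity the last two terms combine to $[[V^{(\ell)},V^{(k)}],\Psi^{(N)}]$. By~(\ref{Videntity}) with the roles of $k$ and $\ell$ swapped, $\partial_k V^{(\ell)} - \partial_\ell V^{(k)} = [V^{(k)},V^{(\ell)}] = -[V^{(\ell)},V^{(k)}]$. The two contributions therefore cancel, which gives equality of the mixed derivatives at the level of the Laurent polynomial $\Psi^{(N)}$. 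Equating coefficients of powers of $\lambda$ then gives the commutation of the vector fields $\mathbf{F}$ in~(\ref{Fode}) on $\mathbb{C}^{3N}$.

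The main obstacle is not the algebra, which is formal, but justifying that the zero-curvature identity is being applied in the correct sense. The identity must hold as an identity in the dynamical variables: if one flow is applied and then the other, the derivatives of $V^{(k)}$ must be computed using~(\ref{ode}) itself, as the proof of the zero-curvature identity does. I would make this explicit by treating each $\partial_k$ as the derivation on polynomial functions of $\mathbf{y}$ induced by the vector field $\mathbf{F}$. The statement then becomes the vanishing of the Lie bracket of the $k$-th and $\ell$-th vector fields. This vanishing follows from the computation above, since both sides are polynomial identities in $\mathbf{y}$ and $\lambda$.
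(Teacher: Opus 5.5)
Your proposal is correct and follows essentially the same route as the paper: the product rule, substitution of~(\ref{ode}), the Jacobi identity and the zero-curvature identity~(\ref{Videntity}), with the paper merely arranging these as one chain of equalities that turns $\partial_k(\partial_\ell \Psi^{(N)})$ into $\partial_\ell(\partial_k \Psi^{(N)})$ instead of subtracting the two. You also read $\partial_k$ as the derivation induced by the vector field $\mathbf{F}$, so that the claim becomes the vanishing of the Lie bracket of the flows; the paper leaves this implicit, and making it explicit is a worthwhile clarification, since the statement is otherwise trivial (equality of mixed partials) whenever a multi-time solution is simply assumed to exist.
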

\begin{proof}
Using the Jacobi identity and the zero-curvature identity~(\ref{Videntity}), 
\begin{equation}
\begin{array}{rcl}
\partial_k (\partial_\ell \Psi^{(N)}) & = & \partial_k [V^{(\ell)},\Psi^{(N)}] \\[.1in]
& = &[\partial_k V^{(\ell)}, \Psi^{(N)}] + [V^{(\ell)}, \partial_k \Psi^{(N)}]  \\[.1in]
& = & [\partial_k V^{(\ell)}, \Psi^{(N)}] + [V^{(\ell)}, [V^{(k)}, \Psi^{(N)}]]  \\[.1in]
& = & [\partial_k V^{(\ell)}, \Psi^{(N)}]  - [ \Psi^{(N)}, [V^{(\ell)}, V^{(k)}]] - [V^{(k)}, [ \Psi^{(N)}, V^{(\ell)}]] \\[.1in]
& = & [\partial_k V^{(\ell)} + [ V^{(\ell)}, V^{(k)}], \Psi^{(N)}]  + [V^{(k)}, [V^{(\ell)}, \Psi^{(N)}]] \\[.1in]
& = & [\partial_\ell V^{(k)}, \Psi^{(N)}]+[V^{(k)}, \partial_\ell \Psi^{(N)}] \\[.1in]
& = & \partial_\ell [ V^{(k)}, \Psi^{(N)}] \\[.1in]
& = & \partial_\ell (\partial_k \Psi^{(N)}).
\end{array}
\end{equation}
\end{proof}

Thus the hierarchy generated by $\Psi^{(N)}$ defines compatible finite-dimensional dynamical systems of classical commuting flows. In order to produce real solutions of the 
mKdV equation~(\ref{mkdv}), we consider invariant reality conditions on the dynamical variables. The focusing reality conditions are associated with the real form $\mathfrak{su}(2),$ and the defocusing reality conditions are associated with  the real form $\mathfrak{sl}(2,\mathbb{R}).$

\begin{theorem}[Reality Conditions]
Suppose the initial data satisfy either one of the following reality conditions:
\begin{enumerate}
\item Focusing reality conditions,
\begin{equation}\label{focusingreality}
f_j=-\overline{f_j},
\qquad
g_j=-\overline{h_j},
\qquad
b = i,
\end{equation} for $j=0, \ldots, N-1.$
\item Defocusing reality conditions,
\begin{equation}\label{defocusingreality}
f_j = \overline{f_j}, 
\qquad
g_j =\overline{g_j},
\qquad
h_j=\overline{h_j},
\qquad
b = 1,
\end{equation} for $j=0, \ldots, N-1.$
\end{enumerate}
Then the hierarchy flows preserve the reality conditions.
\end{theorem}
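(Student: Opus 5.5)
The plan is to write each set of reality conditions as the fixed-point set of an antilinear involution of $\mathfrak g$. This involution should be a Lie algebra automorphism, commute with the splitting, and map the ansatz form of $\Psi^{(N)}$ to itself. Invariance then follows from uniqueness of solutions of the polynomial (hence locally Lipschitz) system~(\ref{Fode}). For the focusing case we use the involution $\rho$ of the Corollary. For the defocusing case we use $(\kappa\Phi)(\lambda)=\overline{\Phi(\overline\lambda)}$, i.e. conjugation of all coefficients. This $\kappa$ clearly commutes with $(\cdot)_\pm$ and with $[\cdot,\cdot]$.

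\emph{Step 1: identify the fixed-point sets.}
Conjugation by $J$ multiplies the $(1,2)$ entry by $\lambda^{-1}$ and the $(2,1)$ entry by $\lambda$. Hence
\begin{equation*}
\tau\Psi^{(N)}=\begin{pmatrix}-F_{N-1} & H_N\lambda^{-1}\\ G_N & F_{N-1}\end{pmatrix},
\qquad
\rho\Psi^{(N)}=\begin{pmatrix}\overline{F}_{N-1} & -\overline{H}_N\lambda^{-1}\\ -\overline{G}_N & -\overline{F}_{N-1}\end{pmatrix},
\end{equation*}
where bars denote conjugation of the coefficients. In particular $\rho$ preserves the ansatz: it exchanges the roles of $G_N$ and $H_N$ and keeps the degrees.

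Comparing coefficients, $\rho\Psi^{(N)}=\Psi^{(N)}$ holds exactly when $f_j=-\overline{f_j}$ and $g_j=-\overline{h_j}$ for all $j$, together with $b=-\overline b$ at the top coefficient. The last condition is satisfied by $b=i$. So (\ref{focusingreality}) is equivalent to $\rho\Psi^{(N)}=\Psi^{(N)}$. Likewise (\ref{defocusingreality}) is equivalent to $\kappa\Psi^{(N)}=\Psi^{(N)}$, since $b=1$ is real.

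\emph{Step 2: algebraic properties of $\rho$.}
From $[A,B]^T=[B^T,A^T]$ we get $\tau[A,B]=-[\tau A,\tau B]$. Therefore
\begin{equation*}
\rho[A,B]=[\rho A,\rho B],
\end{equation*}
the two minus signs cancelling under conjugation. Next, $\tau$ commutes with multiplication by $\lambda^m$, and $\overline{\overline{\lambda}^m}=\lambda^m$, so $\rho(\lambda^m\Phi)=\lambda^m\rho\Phi$. Combining this with the Corollary gives
\begin{equation*}
\rho V^{(k)}=\bigl(\lambda^{k-N+1}\rho\Psi^{(N)}\bigr)_+ .
\end{equation*}
That is, $\rho$ maps the flow matrix built from $\Psi^{(N)}$ to the flow matrix built from $\rho\Psi^{(N)}$. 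The same facts hold trivially for $\kappa$.

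\emph{Step 3: uniqueness argument.}
Let $\mathbf y(t_k)$ solve (\ref{Fode}) with data satisfying the reality condition, and set $\widetilde\Psi=\rho\Psi^{(N)}(t_k)$. Since $t_k$ is real and $\rho$ is real-linear, we have $\partial_k\widetilde\Psi=\rho\,\partial_k\Psi^{(N)}$. By Step 2 this equals $[\widetilde V^{(k)},\widetilde\Psi]$, where $\widetilde V^{(k)}=(\lambda^{k-N+1}\widetilde\Psi)_+$.

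By Step 1, $\widetilde\Psi$ has the ansatz form with constant top coefficient. The constant is $-\overline{b}$, which equals $i$ when $b=i$. So the coefficients of $\widetilde\Psi$ solve the same system (\ref{Fode}) with the same initial data. Uniqueness for the polynomial vector field $\mathbf F$ then gives $\widetilde\Psi=\Psi^{(N)}$ on the whole interval of existence, and the reality conditions persist. The defocusing case is identical with $\kappa$ in place of $\rho$.

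The main obstacle I expect is bookkeeping in Step 1 and Step 3. One must check that $\rho$ really sends the shifted ansatz (off-diagonal degrees $N$ and $N-1$ with the $\lambda^{-1}$ shift) to itself. One must also check that the constraint $g_N=h_N=b$ is compatible with the involution, which is exactly why $b=i$ is forced. Finally, the system (\ref{Fode}) in $\mathbf y$ has to be read as the full matrix equation (\ref{ode}), so that matching the transformed coefficients is legitimate.
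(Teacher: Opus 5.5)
Your proposal is correct and follows essentially the same route as the paper. Both arguments identify the reality conditions as the fixed-point sets of $\rho$ (focusing) and of coefficientwise conjugation (defocusing), use the Corollary to show that the transformed $\Psi^{(N)}$ satisfies the same hierarchy equations, and conclude by uniqueness. Your version just makes explicit the bookkeeping the paper leaves implicit: that $\rho$ is a bracket automorphism commuting with multiplication by $\lambda^m$, and that it preserves the shifted ansatz with $b\mapsto-\overline{b}$.
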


\begin{proof}

If we set $b=i,$ then the focusing reality conditions~(\ref{focusingreality}) are equivalent to 
\begin{equation}
\left( \rho  \Psi^{(N)} \right) (\lambda)  = - J(\lambda)  \left(\Psi^{(N)} (\overline{\lambda}) \right)^\dagger J^{-1}(\lambda) = \Psi^{(N)} (\lambda).
\end{equation}
where $(\cdot)^\dagger$ indicates the conjugate transpose. Now, for $k=0, 1, \ldots, N-1,$
\begin{equation}
\begin{array}{rcl}
\frac{\partial}{\partial t_k} \left(  \rho \Psi^{(N)}  \right)     
& = & - J  \left(\left[V^{(k)} (\overline{\lambda}),\Psi^{(N)} (\overline{\lambda}) \right]\right)^\dagger J^{-1}  \\[.1in]
& = & \left[ -J \left( V^{(k)}(\overline{\lambda}) \right)^\dagger J^{-1}, - J \left( \Psi^{(N)} (\overline{\lambda}) \right)^\dagger J^{-1} \right] \\[.1in]
& = & \left[  \rho V^{(k)} ,  \rho \Psi^{(N)}  \right] \\[.1in]
& = &  \left[  \rho  \left(\lambda^{k-N+1} \Psi^{(N)}\right)_{+}  ,  \rho \Psi^{(N)}  \right] \\[.2in]
& = &  \left[  \left(  \lambda^{k-N+1}  \left(\rho    \Psi^{(N)} \right) \right)_{+}  ,   \rho \Psi^{(N)}  \right]. \\[.2in]
\end{array}
\end{equation}
Since the transformed solution satisfies the same hierarchy equations and agrees with the original solution at the initial time whenever the focusing reality conditions hold, uniqueness implies that the focusing reality conditions are preserved for all hierarchy times.

If we set $b=1,$ then the defocusing reality conditions~(\ref{defocusingreality}) are equivalent to
\begin{equation}
\overline{ \Psi^{(N)} (\overline{\lambda} ) } = \Psi^{(N)}(\lambda).
\end{equation}
Now
\begin{equation}
\begin{array}{rcl}
\frac{\partial}{\partial t_k}  \left(  \overline{\Psi^{(N)} (\overline{\lambda})} \right)
& = & \left[\overline{V^{(k)} (\overline{\lambda})},\overline{\Psi^{(N)} (\overline{\lambda})} \right] \\
& = & \left[ \left( \lambda^{k-N+1} \overline{\Psi^{(N)} (\overline{\lambda} ) } \right)_{+}, \overline{\Psi^{(N)} (\overline{\lambda} )} \right].
\end{array}
\end{equation}
As before, the transformed solution satisfies the same hierarchy equations. If the solution agrees with the original solution at the initial time when the defocusing reality conditions hold, then uniqueness implies that the defocusing reality conditions are preserved for all hierarchy times.

\end{proof}

\begin{theorem}[mKdV Equation and Hierarchy Relations]
Let
$
\Psi^{(N)}
$
satisfy the hierarchy equations
\begin{equation} \label{hier}
\partial_k\Psi^{(N)}
=
[V^{(k)},\Psi^{(N)}],
\qquad
k=0,\ldots,N-1,
\end{equation}
subject to one of the two  invariant reality conditions~(\ref{focusingreality}) or~(\ref{defocusingreality}).

Then the leading-order recursion relations are
\begin{subequations}\label{leadingorder}
\begin{align}
\partial_k f_{N-1}
&=
b(g_{N-k-1}-h_{N-k-1}), \label{leadingA}\\
\partial_k g_{N-1}
&=
2 b f_{N-k-2}-2 f_{N-1} g_{N-k-1}, \label{leadingB}\\
\partial_k h_{N-1}
&=
-2 b f_{N-k-2} + 2 f_{N-1} h_{N-k-1},\label{leadingC}
\end{align}
\end{subequations}
for \(k=0,\ldots,N-1\), with the convention that $f_j=g_j=h_j=0,$ for $j <0.$

Moreover, if we set $f_{N-1}=iu,$ with focusing reality conditions, or $f_{N-1}=u,$ with defocusing reality conditions,  then the case $N=1$ reduces to the stationary mKdV equation for the $k=0$ flow, and the case $N \geq 2$ reduces to the mKdV equation~(\ref{mkdv}) for the $k=0$ and $k=1$ flows.
\end{theorem}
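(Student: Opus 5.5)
The plan is a direct coefficient comparison in the Lax equation~(\ref{hier}), using the expansions~(\ref{psiexpansion}) and~(\ref{Vpsiexpansion}), followed by elimination of the lower-order dynamical variables with the help of the $x$-flow ($k=0$) and a conserved quantity.

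\textbf{Step 1: the leading-order relations~(\ref{leadingorder}).}
\begin{itemize}
\item In~(\ref{psiexpansion}) the top coefficient $\left(\begin{smallmatrix}0&0\\ b&0\end{smallmatrix}\right)\lambda^N$ is constant, so the $\lambda^N$ part of the commutator must vanish. This can be checked directly.
\item The diagonal of the $\lambda^{N-1}$ coefficient in~(\ref{Vpsiexpansion}) is $-b(g_{N-k-1}-h_{N-k-1})$ in the $(1,1)$ slot. Comparing with $-\partial_k f_{N-1}$ gives~(\ref{leadingA}).
\item The lower-left entry of the $\lambda^{N-1}$ coefficient gives~(\ref{leadingC}) for $\partial_k h_{N-1}$.
\item For $g_{N-1}$ one must be careful. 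Because of the shift $G_N\lambda^{-1}$, the upper-right coefficient of $\lambda^{N-1}$ is the constant $b$, so its derivative vanishes. The variable $g_{N-1}$ sits in the $\lambda^{N-2}$ upper-right slot.
\item I therefore compute that entry of $[V^{(k)},\Psi^{(N)}]$ using only the top two terms of $V^{(k)}$ against the matching terms of $\Psi^{(N)}$. The result should be $2bf_{N-k-2}-2f_{N-1}g_{N-k-1}$, giving~(\ref{leadingB}).
\item The boundary cases, where the projection $(\cdot)_+$ removes $h_{N-k-1}$ from the constant term of $V^{(k)}$ (for instance $k=N-1$), are handled by the convention that $f_j=g_j=h_j=0$ for $j<0$.
\end{itemize}

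\textbf{Step 2: conserved quantities.} Since $\partial_k\Psi^{(N)}=[V^{(k)},\Psi^{(N)}]$, the quantity
\[
\det\Psi^{(N)}=-\bigl(F_{N-1}^2+\lambda^{-1}G_NH_N\bigr)
\]
is invariant under every flow. I will use its coefficients as first integrals.
\begin{itemize}
\item The coefficient of $\lambda^{2N-1}$ is $b(g_{N-1}+h_{N-1})$.
\item The coefficient of $\lambda^{2N-2}$ is $f_{N-1}^2+b(g_{N-2}+h_{N-2})+g_{N-1}h_{N-1}$.
\end{itemize}
Both are constants.

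\textbf{Step 3: the case $N\ge2$.} Write $x=t_0$ and $t=t_1$, and put $f_{N-1}=iu$ (focusing, $b=i$) or $f_{N-1}=u$ (defocusing, $b=1$).
\begin{itemize}
\item From the $k=0$ relations, $u_x$ is proportional to $g_{N-1}-h_{N-1}$.
\item Differentiating again, $u_{xx}$ is expressed through $f_{N-2}$ and $f_{N-1}(g_{N-1}+h_{N-1})$. The sum $g_{N-1}+h_{N-1}$ is constant by Step 2, so this gives $f_{N-2}$ in terms of $u_{xx}$ and $u$.
\item The $k=1$ flow gives $u_t$ proportional to $g_{N-2}-h_{N-2}$.
\item To express this difference I need the next-order ($\lambda^{N-2}$ and $\lambda^{N-3}$) entries of the $k=0$ equation. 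These give $\partial_0 f_{N-2}$ in terms of $g_{N-2}-h_{N-2}$ plus products such as $f_{N-1}(g_{N-1}-h_{N-1})$ terms.
\item Inserting $f_{N-2}$ from $u_{xx}$, and eliminating $g_{N-1}h_{N-1}$ and $g_{N-2}+h_{N-2}$ with the $\lambda^{2N-2}$ invariant, should produce
\[
u_t=-u_{xxx}-6\sigma u^2u_x+cu_x .
\]
\item The constant $c$ (a Galilean-type drift) must vanish or be absorbed. I expect to absorb it by a linear combination of the $t_0$ and $t_1$ flows, or to show that it vanishes by normalizing the invariant.
\item The sign $\sigma$ arises from $b^2=-1$ versus $b^2=1$ together with $f_{N-1}^2=-u^2$ versus $u^2$. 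The reality conditions of the preceding Theorem guarantee that $u$ is real.
\end{itemize}

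\textbf{Step 4: the case $N=1$.} Here $F_0=f_0$, and only the $k=0$ flow exists. Then:
\begin{itemize}
\item $u_x\propto g_0-h_0$;
\item differentiating, $u_{xx}$ equals $-2bf_{-1}=0$ corrected by the $g_0+h_0$ term;
\item the determinant invariant gives $u_{xx}+2\sigma u^3=cu$.
\end{itemize}
This is the integrated stationary mKdV equation.

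\textbf{Main obstacle.} I expect the hardest part to be the bookkeeping in Step 3: extracting the correct subleading entries of $[V^{(0)},\Psi^{(N)}]$, including the effect of the $c_0$-correction in $(\cdot)_+$, and choosing the invariant combination so that all non-mKdV terms cancel.
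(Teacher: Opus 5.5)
Your route is the paper's (coefficient comparison, then a first integral of the $k=0$ flow to solve for $f_{N-2}$, then the $k=1$ flow). However, Step 2 contains an error that breaks Step 3.

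In $\det\Psi^{(N)}=-(F_{N-1}^2+\lambda^{-1}G_NH_N)$ you dropped the shift $\lambda^{-1}$. The coefficient of $\lambda^{2N-1}$ is just $b^2$, and $b(g_{N-1}+h_{N-1})$ sits at $\lambda^{2N-2}$ together with $f_{N-1}^2$. So $g_{N-1}+h_{N-1}$ is \emph{not} conserved. The correct first integral is $b(g_{N-1}+h_{N-1})=-f_{N-1}^2+c$. The paper gets it directly by adding the $k=0$ equations for $g_{N-1}$ and $h_{N-1}$ and using (\ref{leadingA}).

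With your version, treating $C=b(g_{N-1}+h_{N-1})$ as constant, Step 3 yields $4b^2f_{N-2}=(f_{N-1})_{xx}+2Cf_{N-1}$. In the end you get the linear equation $4b^2(f_{N-1})_t=(f_{N-1})_{xxx}+2C(f_{N-1})_x$, so the cubic term, which is the content of the theorem, is lost.

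Your plan to recover it by eliminating $g_{N-1}h_{N-1}$ and $g_{N-2}+h_{N-2}$ with a second invariant cannot work. The $(1,1)$ entry of the $\lambda^{N-2}$ coefficient of $[V^{(0)},\Psi^{(N)}]$ is exactly $-b(g_{N-2}-h_{N-2})$, with no product terms. Hence $\partial_0 f_{N-2}=b(g_{N-2}-h_{N-2})=\partial_1 f_{N-1}$ by (\ref{leadingA}) with $k=1$, and those quantities never enter. Your second invariant is also misindexed: the coefficient of $\lambda^{2N-3}$ in $-\det\Psi^{(N)}$ is $2f_{N-1}f_{N-2}+b(g_{N-2}+h_{N-2})+g_{N-1}h_{N-1}$. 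Note too that in Step 4 you tacitly use the correct integral $f_0^2+b(g_0+h_0)$ to get the cubic, which contradicts your own Step 2 at $N=1$.

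With the correct integral the argument closes at once. You get $4b^2f_{N-2}=(f_{N-1})_{xx}-2f_{N-1}^3+2cf_{N-1}$. Differentiating in $x$ and using $(f_{N-1})_t=(f_{N-2})_x$ gives
\[
4b^2(f_{N-1})_t=(f_{N-1})_{xxx}-6f_{N-1}^2(f_{N-1})_x+2c(f_{N-1})_x .
\]
The constant $c$ does not vanish in general. Comparing $\lambda^{2N}$ coefficients in (\ref{Reqn}) gives $c=-b^2\sum_j\lambda_j$, which is fixed by the spectrum, so it cannot be normalized away. You must take your first option: the change of variables $x=X+2cT$, with $t=4T$ (focusing) or $t=-4T$ (defocusing). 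This also absorbs the factor $4b^2=\mp 4$ that is missing from your displayed target equation.
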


\begin{proof}
The leading-order recursion relations follow by equating coefficients in the hierarchy equation~(\ref{hier}) (compare equations~(\ref{psiexpansion}) and~(\ref{Vpsiexpansion})).
Equations~(\ref{leadingA}, \ref{leadingC}) arise from the diagonal and lower-left entries at the level of \(\lambda^{N-1}\), while equation~(\ref{leadingB}) arises from the upper-right entry at the level of \(\lambda^{N-2}\).

In the case where $N=1,$ there is only one time flow, $k=0,$ which we identify with the spatial variable $t_0=x.$ Using the fact that $f_{j}=g_{j}=h_{j}=0$ for all $j<0,$  the system of equations is simply
\begin{subequations}\label{genus0xeqns}
\begin{align}
f_{0x} &=  b (g_0-h_0) \label{genus0xeqnsA}\\[.1in]
g_{0x} &= - 2 f_0 g_0 \\[.1in]
h_{0x} &= 2 f_0 h_0 
\end{align}
\end{subequations}
Summing the derivatives of $g_0$ and $h_0$ and integrating gives 
\begin{equation}
b(g_0+h_0) = -f_0^2 + c,
\end{equation}
where $c \in \mathbb{R}$ is a real constant under both forms of the reality conditions. Differentiating the expression for $f_{0x}$ and eliminating $g_0$ and $h_0$ yields 
\begin{equation}
f_{0xx} - 2 f_0^3 + 2 c f_0 = 0,
\end{equation}
which can be differentiated to obtain
\begin{equation}\label{statgenus1}
2c f_{0x} - 6 f_0^2 f_{0x} + f_{0xxx} = 0.
\end{equation}
Under  the focusing reality conditions, define $u(X,T) = -i f_0 (X+2cT),$ where $x=X+2cT$ and $u$ is a real function, then equation~(\ref{statgenus1}) becomes the focusing mKdV equation~(\ref{mkdv}) with $\sigma=1,$
\begin{equation}
u_T+6 u^2 u_{X} + u_{XXX} = 0.
\end{equation}
Under the defocusing reality conditions, define $u(X,T) = f_0(X+2 cT)$
where $x=X+2cT$ and $u$ is a real function, then equation~(\ref{statgenus1}) becomes the defocusing mKdV equation~(\ref{mkdv}) with $\sigma= -1,$
\begin{equation}
u_T - 6 u^2 u_{X} + u_{XXX} = 0.
\end{equation}

If $N \geq 2,$ then the $k=0$ and $k=1$ flows together produce the mKdV equation with the identification $t_0=x$ and $t_1=t.$ The leading $k=0$ recursion relations  are
\begin{subequations} \label{zeroflow}
\begin{align}
\left( f_{N-1} \right)_x &=  b \left( g_{N-1} - h_{N-1} \right), \label{zeroflowa} \\[.1in]
\left(g_{N-1} \right)_x &=  2 b f_{N-2} - 2 f_{N-1} g_{N-1}, \label{zeroflowb} \\[.1in]
\left(h_{N-1} \right)_x &=   -2 b f_{N-2} + 2 f_{N-1} h_{N-1}. \label{zeroflowc}
\end{align}
\end{subequations}
The next level of the recursion of the $k=0$ flow yields
\begin{equation} \label{zeroflownext}
\left( f_{N-2} \right)_x =  b \left( g_{N-2} - h_{N-2} \right),
\end{equation}
while the leading equation  of the $k=1$ flow is
\begin{equation} \label{firstflow}
\left(f_{N-1} \right)_t = b \left( g_{N-2} - h_{N-2}\right).
\end{equation}
Consequently,
\begin{equation} \label{firsttflow}
\left(f_{N-1}\right)_t = \left( f_{N-2} \right)_x.
\end{equation}
To eliminate $f_{N-2},$ add equations~(\ref{zeroflowb}) and~(\ref{zeroflowc}). Using equation~(\ref{zeroflowa}) gives
\begin{equation}
\left( g_{N-1} + h_{N-1} \right)_x = -2 f_{N-1} \left(g_{N-1} - h_{N-1} \right)  = -\frac{1}{b} \left( f_{N-1}^2 \right)_x.
\end{equation}
Integrating yields
\begin{equation} \label{ghsum}
b(g_{N-1} + h_{N-1}) = - f_{N-1}^2 + c,
\end{equation}
where $c \in \mathbb{R}$ is a real constant under both forms of the reality conditions.

Next subtract equations~(\ref{zeroflowb}) and~(\ref{zeroflowc}). Using equation~(\ref{ghsum}) gives
\begin{equation}
b\left(g_{N-1} - h_{N-1} \right)_x = 4 b^2 f_{N-2} +2 f_{N-1}^3 - 2c f_{N-1}.
\end{equation}
Applying equation~(\ref{zeroflowa}) then yields
\begin{equation}\label{step1}
4 b^2 f_{N-2} = \left( f_{N-1} \right)_{xx} -2 f_{N-1}^3 + 2 c f_{N-1}.
\end{equation}
Differentiating equation~(\ref{step1}) and substituting into equation~(\ref{firsttflow}) produces 
\begin{equation}\label{eveqn}
4 b^2 \left( f_{N-1} \right)_t =  \left(f_{N-1}\right)_{xxx} - 6 f_{N-1}^2 \left( f_{N-1} \right)_x + 2c \left( f_{N-1} \right)_x.
\end{equation}

Under  the focusing reality conditions $b=i.$ Define the real-valued function
\begin{equation}\label{focsoln}
u(X,T) = -i f_{N-1} \left(X+2cT, 4 T \right),
\end{equation}
where $x=X+2cT$ and $t=4 T$. Then equation~(\ref{eveqn}) becomes
\begin{equation}
u_T+6u^2u_X+u_{XXX}=0,
\end{equation}
which is the focusing mKdV equation~(\ref{mkdv}) with $\sigma=1.$

Under  the defocusing reality conditions $b=1.$ Define the real-valued function
\begin{equation}\label{defsoln}
u(X,T) = f_{N-1}\left(X+2cT, -4  T \right), 
\end{equation}
where $x=X+2cT$ and $t=-4T$. Then equation~(\ref{eveqn}) becomes
\begin{equation}
u_T -6 u^2 u_X + u_{XXX} = 0,
\end{equation}
which is the defocusing mKdV equation~(\ref{mkdv}) with $\sigma=-1.$

\end{proof}

The shifted off-diagonal structure of the generating matrix $\Psi^{(N)}$ is reflected in the  recursion relations~(\ref{leadingorder}) and is the mechanism that distinguishes the mKdV hierarchy from the standard AKNS hierarchy~\cite{akns 73}.

\begin{lemma}[Invariant Polynomial]
The $2N+1$-degree polynomial $\mathscr{R}(\lambda),$ defined by
\begin{equation}
\mathscr{R}(\lambda) = -\lambda^2 \det \Psi^{(N)} (\lambda) = \lambda^2 F_{N-1}^2(\lambda) + \lambda G_{N}(\lambda) H_{N}(\lambda) = b^2 \lambda \sum_{j=1}^{2N} (\lambda-\lambda_{j}),
\label{Reqn}
\end{equation}
is  invariant with respect to all the time-flow variables $t_k,$ for $k=0, \ldots, N-1.$  The numbers $\lambda_j \in \mathbb{C},$ for $j=1, \ldots, 2N,$ and the fixed root at 
$\lambda =0$ are the
invariant roots of $\mathscr{R}(\lambda) = 0$ and are assumed to be distinct.
\label{invariant}
\end{lemma}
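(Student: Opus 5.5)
The plan is to prove invariance through the Lax form of the flows, i.e.\ through conservation of $\det\Psi^{(N)}$, and then to read off the degree and the factorization from the shifted ansatz.

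\textbf{Invariance.} For $2\times 2$ traceless matrices, $\det\Psi^{(N)} = -\tfrac12\operatorname{tr}\bigl((\Psi^{(N)})^2\bigr)$. Differentiating along the $t_k$ flow~(\ref{ode}) gives
\begin{equation*}
\partial_k \operatorname{tr}\bigl((\Psi^{(N)})^2\bigr) = 2\operatorname{tr}\bigl(\Psi^{(N)}\,[V^{(k)},\Psi^{(N)}]\bigr) = 0,
\end{equation*}
because the trace is cyclic: $\operatorname{tr}(\Psi V \Psi) = \operatorname{tr}(\Psi\Psi V)$. This identity holds coefficientwise in $\lambda$ and $\lambda^{-1}$. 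So every coefficient of the Laurent polynomial $\det\Psi^{(N)}(\lambda)$ is a first integral of each flow. Multiplying by the $t_k$-independent factor $-\lambda^2$ preserves invariance, so $\mathscr{R}(\lambda)$ is invariant for $k=0,\ldots,N-1$.

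\textbf{The explicit form.} Computing the determinant directly gives
\begin{equation*}
\det\Psi^{(N)} = -F_{N-1}^2 - \lambda^{-1}G_N H_N,
\end{equation*}
so $-\lambda^2\det\Psi^{(N)} = \lambda^2F_{N-1}^2 + \lambda G_N H_N$, which is a genuine polynomial. Both terms have degree at most $2N+1$: the first because $\deg F_{N-1}\le N-1$, the second because $g_N=h_N=b$. The first term has degree at most $2N$, so the $\lambda^{2N+1}$ coefficient comes only from $\lambda G_N H_N$ and equals $b^2\neq 0$. Every term carries a factor of $\lambda$, so $\lambda=0$ is always a root. The remaining factor is a monic-times-$b^2$ polynomial of degree $2N$, and by the fundamental theorem of algebra it factors as $b^2\prod_{j=1}^{2N}(\lambda-\lambda_j)$.

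\textbf{Roots and reality.} The roots $\lambda_j$ are symmetric functions of the invariant coefficients of $\mathscr{R}$, so they are themselves invariant. Distinctness, including $\lambda_j\neq 0$ (equivalently $g_0h_0\ne 0$), is imposed as a hypothesis on the initial data rather than derived. Under either reality condition the coefficients of $\mathscr{R}$ are real: in the focusing case $b^2=-1$ and $\lambda G_NH_N = -\lambda|G_N|^2$ on the real axis. This fact is not needed for the lemma itself.

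\textbf{Main obstacle.} There is essentially none. The only care needed is that the projection $(\cdot)_+$ and the shifted ansatz do not spoil the Lax form. They do not, because~(\ref{ode}) is an exact commutator identity for whatever $V^{(k)}$ is chosen, so conservation of $\operatorname{tr}(\Psi^2)$ is automatic.
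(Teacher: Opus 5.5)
Your proposal is correct and follows essentially the paper's argument: the paper differentiates $\det\Psi^{(N)}$ via Jacobi's formula $\partial_k\det\Psi^{(N)}=\operatorname{tr}(\operatorname{adj}(\Psi^{(N)})\,\partial_k\Psi^{(N)})$ and kills the commutator term by cyclicity of the trace. For a traceless $2\times 2$ matrix, $\operatorname{adj}\Psi^{(N)}=-\Psi^{(N)}$, so this is exactly your computation with $\det\Psi^{(N)}=-\tfrac12\operatorname{tr}((\Psi^{(N)})^2)$, and your explicit check of the degree $2N+1$, the leading coefficient $b^2$, and the fixed root at $\lambda=0$ (reading the $\sum$ in the displayed formula as the intended $\prod$) fills in what the paper leaves to direct inspection.
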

\begin{proof}
The invariance of the determinant of $\Psi$ follows from Jacobi's formula for the derivative of the determinant of a matrix.  If $\mbox{adj}  (\Psi)$ denotes the adjugate matrix of $\Psi,$ then using equation~(\ref{ode}), for $k=0, \ldots, N-1,$ and the invariance of the trace under cyclic permutations,
\begin{equation}
\begin{array}{rcl}
\partial_k \det (\Psi^{(N)}) & = & \mbox{tr}  \left( \mbox{adj} (\Psi^{(N)}) \partial_k \Psi^{(N)} \right) \\[.1in]
& = & \mbox{tr} \left(\mbox{adj} (\Psi^{(N)} ) (V^{(k)} \Psi^{(N)}  - \Psi^{(N)}  V^{(k) }\right)\\[.1in] 
& = & \mbox{tr} \left( \mbox{adj} (\Psi^{(N)} )  V^{(k)}  \Psi^{(N)}  - \mbox{adj} (\Psi^{(N)} )  \Psi^{(N)}   V^{(k)}\right) \\[.1in]
& = & \mbox{tr} \left( \mbox{adj} (\Psi^{(N)} ) V^{(k)} \Psi^{(N)}  \right) - \mbox{tr} \left(\mbox{adj}(\Psi^{(N)} )  \Psi^{(N)}   V^{(k)} \right) \\[.1in] 
& = & \mbox{tr} \left( \Psi^{(N)}   \mbox{adj} (\Psi^{(N)} ) V^{(k)} \right) - \mbox{tr} \left( \mbox{adj} (\Psi^{(N)} )  \Psi^{(N)}   V^{(k)}\right)\\[.1in]
& = & \mbox{tr} \left( \det ( \Psi^{(N)} ) V^{(k)} \right) - \mbox{tr} \left( \det ( \Psi^{(N)} ) V^{(k)} \right) \\
&= & 0.
\end{array}
\end{equation}
\end{proof}

Notice that the  invariant hyperelliptic spectral curve $\Gamma : w^2 = -\det \Psi^{(N)} = F_{N-1}(\lambda)^2 + G_{N} H_{N} \lambda^{-1}$ has a fixed branch point at
 $\lambda =0,$ in addition to the $2N$ roots of $\lambda F_{N-1} (\lambda)^2 + G_{N} H_{N}.$ Thus the finite-gap solution generated by $\Psi^{(N)}$ is parametrized by a hyperelliptic curve with $2N+1$ finite branch points. In addition to the branch point at infinity, this gives a genus of $N.$ In order to account for the additional branch point
at zero in the invariant hyperelliptic spectral curve, it is convenient to add the fixed root at $\lambda=0$ to the definition of the invariant polynomial $\mathscr{R}(\lambda).$

\begin{theorem}[Global Solution of the Focusing mKdV Equation] \label{globalfocusing}
Let $y_j=f_{j-1}, y_{N+j}= g_{j-1}, y_{2N+j} = h_{j-1},$ for $j= 1, \ldots, N,$ be a compatible local  solution of the $N$  autonomous ordinary differential equations~(\ref{Fode})
with initial data that satisfy  the focusing invariant reality conditions~(\ref{focusingreality}). 
Then the solution exists for all $t_k \in \mathbb{R},$ for $k=0, \ldots, N-1,$ and is uniformly bounded on $\mathbb{R}^N.$  In particular, the corresponding solution $u$ of the focusing mKdV equation~(\ref{mkdv}) is a uniformly bounded global solution. 
\end{theorem}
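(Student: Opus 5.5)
The plan is to get a uniform a priori bound on all dynamical variables from the invariant polynomial $\mathscr{R}(\lambda)$ of Lemma~\ref{invariant}, using the focusing reality conditions~(\ref{focusingreality}) to make it a sum of nonnegative terms. Global existence then follows from the standard continuation argument for polynomial ODEs, and the common local solution extends to all of $\mathbb{R}^N$ by commutativity.

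First I rewrite $\mathscr{R}(\lambda)$ for real $\lambda$. Under~(\ref{focusingreality}), $f_j = i\phi_j$ with $\phi_j\in\mathbb{R}$, and $h_j=-\overline{g_j}$; for the top coefficient $b=i$ we have $-\overline{b}=i=b$, so this also holds at $j=N$. Hence for real $\lambda$,
\begin{equation*}
F_{N-1}(\lambda)^2 = -\Phi(\lambda)^2, \qquad G_N(\lambda)H_N(\lambda) = -|G_N(\lambda)|^2,
\end{equation*}
where $\Phi(\lambda)=\sum_j \phi_j\lambda^j$ is real. Therefore
\begin{equation*}
-\mathscr{R}(\lambda) = \lambda^2\Phi(\lambda)^2 + \lambda |G_N(\lambda)|^2 .
\end{equation*}
For $\lambda>0$ both terms are nonnegative. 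Since $\mathscr{R}$ is invariant under all flows $t_k$, for each fixed $\lambda>0$ we obtain
\begin{equation*}
\lambda^2\Phi(\lambda)^2 \le |\mathscr{R}(\lambda)|, \qquad \lambda|G_N(\lambda)|^2 \le |\mathscr{R}(\lambda)|,
\end{equation*}
with right-hand sides fixed by the initial data.

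Next I turn these pointwise bounds into coefficient bounds. I choose $N$ distinct positive points $\lambda_1,\dots,\lambda_N$, say $1,\dots,N$. At these points $\Phi$ is bounded, and since $\Phi$ has degree $N-1$, Lagrange interpolation (an invertible Vandermonde system) bounds $\phi_0,\dots,\phi_{N-1}$. The polynomial $G_N$ has degree $N$ with known leading coefficient $b$, so $G_N-b\lambda^N$ has degree $N-1$ and is bounded at the same $N$ points; interpolation again bounds $g_0,\dots,g_{N-1}$. Then $h_j=-\overline{g_j}$ gives bounds on the $h_j$. All bounds depend only on $\mathscr{R}$, i.e., on the initial data, and are uniform in $(t_0,\dots,t_{N-1})$.

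Finally I pass to global existence. Each flow~(\ref{Fode}) has a polynomial, hence locally Lipschitz, vector field. The reality conditions are preserved by the Reality Conditions theorem, so trajectories stay in the bounded real set determined above, and the standard blow-up alternative gives existence for all $t_k\in\mathbb{R}$. To obtain a joint solution on $\mathbb{R}^N$, I compose the flows successively in $t_0,\dots,t_{N-1}$. The Commuting Differential Equations lemma gives commuting vector fields, so the flow maps commute, and the composition is a well-defined compatible solution extending the local one. It inherits the uniform bound because $\mathscr{R}$ is preserved along each flow. Then $u=-if_{N-1}=\phi_{N-1}$ is bounded, and through~(\ref{focsoln}) it is a bounded global solution of~(\ref{mkdv}) with $\sigma=1$; the case $N=1$ is handled the same way.

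The main obstacle I anticipate is the extension to $\mathbb{R}^N$: making rigorous that commuting complete vector fields generate a global $\mathbb{R}^N$ action that agrees with the given compatible local solution. I would handle this with the standard fact that complete commuting vector fields have commuting flows, together with uniqueness of solutions.
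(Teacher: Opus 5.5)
Your proof is correct and follows essentially the paper's route. You use the focusing reality conditions to write $-\mathscr{R}(\lambda)=\lambda^2\Phi(\lambda)^2+\lambda|G_N(\lambda)|^2$, use the invariance of $\mathscr{R}$ to get pointwise bounds for $\lambda>0$, convert these to coefficient bounds, and conclude global existence by continuation. Your differences are minor improvements: you bound directly by $|\mathscr{R}(\lambda)|$ rather than through the paper's factorization $-\lambda P^+P^-Q$, so you avoid its no-positive-roots step and never use the distinct-roots assumption; you use Lagrange interpolation at $N$ points instead of Markov's inequality on $[1,3]$; and you make explicit the extension to a joint $\mathbb{R}^N$ solution via commuting complete flows, which the paper only asserts.
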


\begin{proof}
The compatibility of the hierarchy flows guarantees the existence of the smooth local  solution. We will show that  the invariant polynomial \(\mathscr R\) controls the Laurent-polynomial coefficients and yields uniform bounds on all dynamical variables.

The focusing reality conditions~(\ref{focusingreality}) imply that $b^2=-1$ and, for $\lambda \in \mathbb{R},$ $F_{N-1} (\lambda) = - \overline{F_{N-1} (\lambda)}$ and $H_{N} (\lambda) = - \overline{G_{N} (\lambda)},$ so that
\begin{equation} \label{Rfocusing}
\mathscr{R} (\lambda) = - \lambda^2 |F_{N-1}(\lambda)|^2 - \lambda |G_{N}(\lambda)|^2.
\end{equation}
Thus, for $\lambda \in \mathbb{R},$ 
\begin{equation}
\overline{\mathscr{R} (\lambda)} = \mathscr{R} (\lambda).
\end{equation}
Therefore, the coefficients of $\mathscr{R}$ are real and the roots of $\mathscr{R}(\lambda)=0$ are either real or occur in complex-conjugate pairs.

Now suppose that $\lambda \in \mathbb{R}$ and $\lambda > 0$ is a positive real root of $\mathscr{R}(\lambda)=0.$ Then equation~(\ref{Rfocusing}) implies that $\lambda$ is 
a zero of both $F_{N-1} (\lambda)$ and $G_{N}(\lambda),$ which means that $\lambda$ is also a double root of $\mathscr{R}(\lambda)=0,$ which is not allowed by the assumption that the roots are distinct. Thus $\mathscr{R}(\lambda)=0$ does not have any positive roots. All the nonzero roots are either negative or  complex-conjugate pairs. This means that there is an even number of negative roots. Therefore, 
\begin{equation}
\mathscr{R}(\lambda) =- \lambda P^+(\lambda) P^- (\lambda) Q (\lambda),
\end{equation}
where $P^{+}, P^{-}, Q$ are monic polynomials,  the roots of $P^{+}(\lambda)$ are all in the upper half plane, $P^-(\lambda) = \overline{P^+ (\overline{\lambda})}$, and $Q(\lambda)$ has an even number of negative roots and no other roots. Therefore, for $\lambda >0,$
\begin{equation}
\lambda^2 |F_{N-1}(\lambda)|^2 + \lambda |G_{N}(\lambda)|^2 =  \lambda |P^+(\lambda)|^2  Q (\lambda)
\end{equation}
implies 
\begin{equation}
\label{Fbound}
\lambda |F_{N-1} (\lambda)| \leq \sqrt{\lambda Q(\lambda)} |P^+ (\lambda)|
\end{equation}
and
\begin{equation}
\label{Gbound}
|G_{N} (\lambda)| \leq  \sqrt{Q(\lambda)} |P^+(\lambda)|.
\end{equation}
Note that $Q(\lambda) >0,$ since $\lambda >0$ and $Q(\lambda)$ is monic and has only negative roots. 
Equations~(\ref{Fbound}) and~(\ref{Gbound}) imply uniform bounds of the polynomials $|F_{N-1}(\lambda)|$ and $|G_{N}(\lambda)|$ on a closed finite interval of positive real numbers, e.g., on $1 \leq \lambda \leq 3.$ After an affine rescaling of the interval \([1,3]\) onto \([-1,1]\), standard coefficient estimates for bounded polynomials (e.g. Markov inequalities~\cite{mark 90}) imply uniform bounds on all coefficients of \(F_{N-1}(\lambda)\) and \(G_N(\lambda)\). The focusing reality conditions imply identical bounds for the coefficients of \(H_N(\lambda)\).
This, in turn, implies the uniform boundedness of all the dynamical variables that appear as coefficients in the polynomials $F_{N-1}, G_{N},$ and $H_{N}.$
Since solutions to polynomial vector fields can be continued as long as the solution remains bounded, the uniform coefficient bounds allow the local solution to be continued globally.
 In particular,  the finite-gap solution of the focusing mKdV equation, $u ({\bf t}) = -i f_{N-1} ({\bf t}),$ is uniformly bounded and exists for all times ${\bf t} \in \mathbb{R}^{N}.$

\end{proof}

\begin{theorem}[Global Solution of the Defocusing mKdV Equation] 
Let $y_j=f_{j-1}, y_{N+j}= g_{j-1}, y_{2N+j} = h_{j-1},$ for $j= 1, \ldots, N,$ be a compatible local  solution of the $N$  autonomous ordinary differential equations~(\ref{Fode}). 
Suppose, also, that the initial conditions are such that
\begin{enumerate}
\item[(a)] the reality conditions of the defocusing mKdV equation~(\ref{defocusingreality}) are satisfied, viz., $b=1$ and
\begin{equation}
\overline{F_{N-1}(\overline{\lambda})} = F(\lambda), \qquad \overline{G_{N}(\overline{\lambda})} = G(\lambda), \qquad \overline{H_{N}(\overline{\lambda})} = H(\lambda),
\end{equation} 
\item[(b)]  all of the nonzero roots of the invariant polynomial $\mathscr{R}(\lambda)$ are distinct negative numbers  ordered so that $\lambda_{k+1} < \lambda_{k} < 0,$ for $k=1, \ldots, 2N-1,$ and
\item[(c)] the degree $N$ polynomials $G_{N}(\lambda)$ and $H_{N}(\lambda)$ each have exactly one  negative root in each of the $N$ intervals $[\lambda_{2j},\lambda_{2j-1}],$ for $j= 1, \ldots, N.$ 
\end{enumerate}
Then the solution is uniformly bounded and exists for all $t_k \in \mathbb{R},$ for $k=0, \ldots, N-1.$  In particular, the corresponding solution $u$ of the defocusing mKdV equation~(\ref{mkdv})  is a uniformly bounded global solution. 
\label{globaldefocusing}
\end{theorem}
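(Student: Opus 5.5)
The plan is to reduce uniform boundedness to confinement of the roots of $G_N$ and $H_N$. Under the defocusing conditions $b=1$, so $G_N$ and $H_N$ are monic of degree $N$ with real coefficients, and $F_{N-1}$ is real. If the $N$ roots of $G_N$ and of $H_N$ stay in the compact set $\bigcup_j[\lambda_{2j},\lambda_{2j-1}]$ for all times, then all coefficients of $G_N,H_N$ are bounded by Vieta's formulas. The bound on $F_{N-1}$ then follows from the invariant polynomial of Lemma~\ref{invariant}. Global existence follows as in the proof of Theorem~\ref{globalfocusing}: a solution of a polynomial vector field that stays bounded can be continued.

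\textbf{Sign pattern of $\mathscr R$.} By Lemma~\ref{invariant} and hypothesis (b), $\mathscr R(\lambda)=\lambda\prod_{j=1}^{2N}(\lambda-\lambda_j)$ with all $\lambda_j<0$, and this is time independent. Counting the negative factors gives the following signs on the real axis:
\begin{itemize}
\item $\mathscr R>0$ for $\lambda>0$;
\item $\mathscr R<0$ on $(\lambda_1,0)$, on each band $(\lambda_{2j+1},\lambda_{2j})$, and on $(-\infty,\lambda_{2N})$;
\item $\mathscr R\ge 0$ on each closed gap $[\lambda_{2j},\lambda_{2j-1}]$.
\end{itemize}
At a real root $\mu$ of $G_N$ (or of $H_N$), equation~(\ref{Reqn}) gives $\mathscr R(\mu)=\mu^2F_{N-1}(\mu)^2\ge 0$. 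Hence every real root of $G_N$ or $H_N$ must lie in $[0,\infty)$ or in a closed gap, and never in the open set where $\mathscr R<0$.

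\textbf{Continuity argument.} This step is the main obstacle. Let $S\subset\mathbb R^N$ be the set of times, in a connected domain of existence containing the initial point, at which $G_N$ and $H_N$ each have exactly one root in each gap. The initial point lies in $S$ by hypothesis (c). The roots of a monic polynomial depend continuously on its coefficients, hence on $\mathbf t$.
\begin{itemize}
\item \emph{$S$ is closed.} The gaps are closed and the root count passes to limits.
\item \emph{$S$ is open.} At a point of $S$ the $N$ roots are real and lie in $N$ disjoint gaps, so they are simple. Simple real roots of a real polynomial stay real under small real perturbations, and by the sign analysis they cannot enter the neighbouring open bands or $(\lambda_1,0)$. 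So nearby each root remains in its own gap.
\end{itemize}
A degenerate case needs checking: a root sitting exactly at a gap endpoint $\lambda_k$. Then $F_{N-1}(\lambda_k)=0$, and the root may move into the gap but still not into a band, so openness survives. By connectedness $S$ is the whole domain of existence, and the roots are trapped in $[\lambda_{2N},\lambda_1]$.

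\textbf{Bound on $F_{N-1}$ and continuation.} For $\lambda\in[1,3]$,
\[
\lambda^2F_{N-1}(\lambda)^2=\mathscr R(\lambda)-\lambda G_N(\lambda)H_N(\lambda),
\]
and the right side is uniformly bounded because $\mathscr R$ is fixed and $G_N,H_N$ have uniformly bounded coefficients. This bounds $|F_{N-1}|$ on $[1,3]$. As in Theorem~\ref{globalfocusing}, rescale $[1,3]$ affinely to $[-1,1]$ and apply a Markov coefficient estimate, which gives uniform bounds on all $f_j$. All $3N$ dynamical variables are then uniformly bounded, so the local solution continues to all of $\mathbb R^N$ and the domain above is all of $\mathbb R^N$. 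By~(\ref{defsoln}), $u=f_{N-1}$ is a uniformly bounded global solution of the defocusing mKdV equation.
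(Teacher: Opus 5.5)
Your proposal is correct and follows essentially the same route as the paper. The identity $\mathscr R(\mu)=\mu^2F_{N-1}(\mu)^2\ge 0$ at real roots of $G_N$ and $H_N$ traps one root in each closed gap, and Vieta then bounds the coefficients of $G_N,H_N$. The invariant bounds $F_{N-1}$ on $[1,3]$, Markov's inequality bounds its coefficients, and boundedness gives global continuation. Your open--closed connectedness argument, with its explicit handling of $[0,\infty)$ and of roots sitting at a gap endpoint, is a more careful version of the paper's ``roots cannot collide and leave their disjoint gaps'' step.
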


\begin{proof}
The defocusing reality conditions imply that the polynomials $F_{N-1}, G_{N},$ and $H_{N}$ have real coefficients and, therefore, their roots are either real or come in complex-conjugate pairs. The same symmetry is true for $\mathscr{R}(\lambda) = \lambda^2 |F_{N-1} (\lambda)|^2 + \lambda G_{N}(\lambda) H_{N} (\lambda).$ 

Now suppose that $\lambda \in \mathbb{R}$ is a  root of $G_{N}(\lambda)$ or $H_{N}(\lambda)$, then
\begin{equation}\label{rneg}
\mathscr{R} (\lambda) = \lambda \prod\limits_{j=1}^{2N} (\lambda - \lambda_{j} ) = \lambda^2 |F_{N-1}(\lambda)|^2 \geq 0.
\end{equation}
Thus $\lambda$ must be a root of $\mathscr{R}(\lambda)$ or less than an even number of the $2N+1$ nonpositive roots of $\mathscr{R} (\lambda)=0,$ by assumption (b).
Therefore $\lambda$ must lie in the interval $\lambda_{2j} \leq \lambda \leq \lambda_{2j-1},$ for some $j= 1, \ldots, N,$ because of the fixed root at zero. Moreover, the root 
$\lambda$ cannot leave the interval it is in and remain real since this would violate the inequality~(\ref{rneg}). However, by  assumption (c), initially there is exactly one root of $G_{N}(\lambda)$ and exactly one root of $H_{N}(\lambda)$ in each interval
$[\lambda_{2j}, \lambda_{2j-1}],$ for $j=1, \ldots, N.$  Since the coefficients of both $G_{N}(\lambda)$ and $H_{N}(\lambda)$ are real, their roots are either real or occur in complex-conjugate pairs. The intervals
$[\lambda_{2j},\lambda_{2j-1}],$ for $j=1,\ldots, N,$  are disjoint, so it is impossible for two roots of $G_{N}(\lambda)$ (or $H_{N}(\lambda)$) to flow continuously, collide, and become a complex-conjugate pair, because initially only one is in each of the disjoint real intervals to which they are individually constrained. Hence, the roots remain in these bounded intervals for all $t_k \in \mathbb{R},$ for $k=0, \ldots, N-1,$ for which the local solution continues to exist.  Thus the roots of $G_{N}(\lambda)$ and $H_{N}(\lambda)$ and, hence, the coefficients of $G_{N}(\lambda)$ and $H_{N}(\lambda)$ are uniformly bounded.

Moreover, since the coefficients of $G_N$ and $H_N$ are uniformly bounded as functions of ${\bf t} \in \mathbb{R}^{N},$
\begin{equation}
|\lambda|^2 |F_{N-1}(\lambda)|^2 \leq |\lambda| |G_{N}(\lambda)H_{N}(\lambda)| + |\mathscr{R}(\lambda)|
\end{equation}
implies that there exists a finite positive number $M$ such that
\begin{equation}
\max\limits_{1 \leq \lambda \leq 3} |F_{N-1} (\lambda)| \leq M,
\end{equation}
where the interval $[1,3]$ is chosen to avoid the value $\lambda=0,$ but otherwise could be any closed interval between two positive numbers.
Using an affine transformation of the interval $[1,3]$ to $[-1,1],$ we can apply Markov's inequality to conclude that the coefficients of $F_{N-1}(\lambda)$ are also uniformly bounded as functions of ${\bf t} \in \mathbb{R}^{N}$.

Since  the uniformly bounded coefficients of $F_{N-1}, G_{N},$ and $H_{N}$ are the components of the  compatible local solution of the commuting polynomial vector fields of equation~(\ref{Fode}), the local solution can be extended globally.   In particular,  the finite-gap solution of the defocusing mKdV equation, $u ({\bf t}) = f_{N-1} ({\bf t}),$ is uniformly bounded and exists for all times ${\bf t} \in \mathbb{R}^{N}.$
\end{proof}

\section{Maximal Amplitudes}

\begin{theorem}[Critical Points]\label{criticalpoints}
Let \(y_j=f_{j-1},\; y_{N+j}=g_{j-1},\; y_{2N+j}=h_{j-1}\), for \(j=1,\ldots,N\), be a compatible solution of the \(N\) autonomous ordinary differential equations~\eqref{Fode} under either the focusing or defocusing reality conditions.
If the point \(\mathbf t \in\mathbb R^N\) is a critical point of \(-i f_{N-1}:\mathbb R^N\to\mathbb R\) in the focusing case, or \( f_{N-1}:\mathbb R^N\to\mathbb R\) in the defocusing case, then
\begin{equation}
g_j(\mathbf t)=h_j(\mathbf t),
\qquad j=0,\ldots,N-1,
\end{equation}
at \((t_0,t_1,\ldots,t_{N-1})=\mathbf t\).
\end{theorem}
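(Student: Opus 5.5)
The plan is to use the leading-order recursion relation~\eqref{leadingA}. It expresses each partial derivative of $f_{N-1}$ with respect to $t_k$ directly in terms of the difference $g_{N-k-1}-h_{N-k-1}$. At a critical point every partial derivative vanishes, so the differences $g_j-h_j$ vanish for all indices $j$ reached as $k$ runs over $0,\ldots,N-1$.

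First, I would observe that in the focusing case $-if_{N-1}$ is real-valued by the reality conditions~\eqref{focusingreality}, which the flows preserve. Its gradient vanishes at $\mathbf t$ exactly when $\partial_k f_{N-1}(\mathbf t)=0$ for every $k=0,\ldots,N-1$, since multiplication by $-i$ is harmless. The same holds trivially for $f_{N-1}$ in the defocusing case.

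Second, I would invoke the Theorem on the mKdV equation and hierarchy relations, specifically equation~\eqref{leadingA}:
\begin{equation*}
\partial_k f_{N-1} = b\,(g_{N-k-1}-h_{N-k-1}), \qquad k=0,\ldots,N-1.
\end{equation*}
Since $b=i$ or $b=1$ is nonzero, the vanishing of $\partial_k f_{N-1}(\mathbf t)$ gives $g_{N-k-1}(\mathbf t)=h_{N-k-1}(\mathbf t)$. As $k$ ranges over $0,\ldots,N-1$, the index $j=N-k-1$ ranges over $N-1,\ldots,0$, so every required equality $g_j=h_j$ follows.

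The only step needing care is confirming that~\eqref{leadingA} is valid for every $k$ from $0$ to $N-1$, including the extreme case $k=N-1$, where $j=0$ and the coefficient comes from the bottom of the recursion. I would check this against the expansion~\eqref{Vpsiexpansion}. The diagonal entry of the $\lambda^{N-1}$ coefficient of $[V^{(k)},\Psi^{(N)}]$ arises from the top term $\left(\begin{smallmatrix}0&0\\ b&0\end{smallmatrix}\right)\lambda^{k+1}$ of $V^{(k)}$ paired with the $\lambda^{N-k-2}$ coefficient of $\Psi^{(N)}$. When $k=N-1$, that coefficient is the $\lambda^{-1}$ term $\left(\begin{smallmatrix}0&g_0\\0&0\end{smallmatrix}\right)$. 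The diagonal entry also receives a contribution from the $\lambda^k$ coefficient of $V^{(k)}$, whose upper-right entry is $b$, paired with the lower-left entry of the $\lambda^{N-k-1}$ coefficient of $\Psi^{(N)}$, namely $h_{N-k-1}$ (giving $h_0$ when $k=N-1$). Both contributions are present, so the formula $b(g_0-h_0)$ holds at $k=N-1$ too. For $N=1$ the statement reduces to~\eqref{genus0xeqnsA}.
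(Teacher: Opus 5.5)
Your proposal is correct and is the same argument as the paper's proof. At a critical point every $\partial_k f_{N-1}$ vanishes, and relation~\eqref{leadingA} with $b\neq 0$ then gives $g_{N-k-1}=h_{N-k-1}$ for $k=0,\ldots,N-1$. Your extra check that \eqref{leadingA} holds at $k=N-1$, where the $\lambda^{-1}$ entry $g_0$ and the projected-out entry $h_0$ enter, is a sound addition; the paper leaves this implicit in its derivation of the leading-order relations.
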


\begin{proof}
Since \(\mathbf t\) is a critical point,
\begin{equation}
\partial_k f_{N-1}(\mathbf t)=0,
\qquad k=0,\ldots,N-1.
\end{equation}
Therefore, equation~(\ref{leadingA}) implies
\begin{equation}
g_{N-k-1}(\mathbf t)=h_{N-k-1}(\mathbf t),
\qquad k=0,\ldots,N-1.
\end{equation}
Hence,
\begin{equation}
g_j(\mathbf t)=h_j(\mathbf t),
\qquad j=0,\ldots,N-1.
\end{equation}

\end{proof}

\begin{definition} \label{squareroots}
Let
\begin{equation}
\mathscr R(\lambda)
=
b^2\lambda
\prod_{j=1}^{2N}
(\lambda-\lambda_j)
\end{equation}
be the invariant polynomial of Lemma~\ref{invariant}, where
$\lambda_1,\ldots,\lambda_{2N},$
denote its distinct nonzero roots.  Assume there are no positive real roots.
For each $j=1, \ldots, 2N,$ define $E_j \in \mathbb{C}$ to be the unique square root of $\lambda_j$ with positive imaginary part, viz.,  $\lambda_j = E_j^2$ and $\Im \left( E_j \right) > 0,$ and
define
\begin{equation}
\mathcal E^+ = \{ E_1, \ldots, E_{2N} \}.
\end{equation}
\end{definition}

\begin{theorem}[Sharp Upper Bound for the  Focusing mKdV Equation] \label{sharpfocusing}
Let
\[
y_j=f_{j-1}, \qquad
y_{N+j}=g_{j-1}, \qquad
y_{2N+j}=h_{j-1},
\qquad j=1,\ldots,N,
\]
be a compatible global solution of the \(N\) autonomous ordinary differential equations~\eqref{Fode}, with initial conditions satisfying the focusing reality conditions given in Theorem~\ref{globalfocusing}.
Then the corresponding solution \(u\) of the mKdV equation~(\ref{mkdv}) satisfies
\begin{equation}
|u(x,t)|
\le
\sum_{E\in\mathcal E^+}\Im \left( E \right).
\end{equation}
Moreover, the bound is sharp, in the sense that there exists a solution which attains the upper bound.
\end{theorem}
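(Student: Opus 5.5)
The plan is to show that at any point where $u$ is extremal, the data force a factorization of $\mathscr R(E^2)$ over the double cover $\lambda=E^2$. The value of $u$ then becomes a sum of roots of an explicit polynomial of degree $2N$ in $E$.

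\textbf{Step 1: reduce to critical points.} Fix the invariant polynomial $\mathscr R$. Let $\mathcal S$ be the set of points $\mathbf y\in\mathbb C^{3N}$ that satisfy the focusing reality conditions~(\ref{focusingreality}) and have $-\lambda^2\det\Psi^{(N)}=\mathscr R$. The set $\mathcal S$ is closed, and the Markov-inequality argument in the proof of Theorem~\ref{globalfocusing} bounds all coefficients in terms of $\mathscr R$ alone, so $\mathcal S$ is compact. By Lemma~\ref{invariant} and the reality theorem, $\mathcal S$ is invariant under all the flows. The continuous function $-if_{N-1}$ therefore attains its maximum and minimum on $\mathcal S$, at some point $\mathbf y^*$.

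Let $\mathbf y(\mathbf t)$ be the global solution starting at $\mathbf y^*$. The function $\mathbf t\mapsto -if_{N-1}(\mathbf y(\mathbf t))$ has an extremum at $\mathbf t=0$, so $0$ is a critical point. Theorem~\ref{criticalpoints} then gives $g_j=h_j$ for $j=0,\dots,N-1$ at $\mathbf y^*$. Since also $g_N=h_N=b$, we get $G_N=H_N$ at that point. It therefore suffices to bound $|f_{N-1}|$ at points of $\mathcal S$ where $G_N=H_N$.

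\textbf{Step 2: factorization at an extremum.} At such a point the relation $g_j=-\overline{h_j}=-\overline{g_j}$ makes every $g_j$ purely imaginary. Write $G_N=H_N=i\tilde G(\lambda)$ and $F_{N-1}=i\tilde F(\lambda)$, where $\tilde G$ is real and monic of degree $N$, and $\tilde F$ is real with leading coefficient $u$. Then
\[
\mathscr R(\lambda)=-\lambda^2\tilde F^2-\lambda\tilde G^2 .
\]
Substituting $\lambda=E^2$ gives
\[
\mathscr R(E^2)=-E^2\,P(E)\,P(-E),\qquad P(E)=\tilde G(E^2)+iE\tilde F(E^2).
\]
Here $P$ is monic of degree $2N$, and its $E^{2N-1}$ coefficient is $iu$. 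The roots of $P(E)P(-E)$ are exactly $\pm E_1,\dots,\pm E_{2N}$, and these are distinct because the $\lambda_j$ are distinct and nonzero. Hence $P$ has roots $\epsilon_jE_j$ for some signs $\epsilon_j\in\{\pm1\}$.

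By Vieta, $iu=-\sum_j\epsilon_jE_j$. Since $u$ is real, taking imaginary parts gives
\[
|u|=\Bigl|\sum_j\epsilon_j\Im E_j\Bigr|\le\sum_{E\in\mathcal E^+}\Im(E).
\]
Combined with Step 1, this bound holds for every point of every focusing trajectory. Since $u(X,T)$ is $-if_{N-1}$ evaluated along the flows, the estimate transfers directly to the mKdV solution.

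\textbf{Step 3: sharpness.} Take $P(E)=\prod_{j=1}^{2N}(E-E_j)$, so that all roots lie in the upper half plane. The root set is invariant under $E\mapsto-\overline E$, because the upper square root of $\overline{\lambda_j}$ is $-\overline{E_j}$ and the $\lambda_j$ are closed under conjugation. This symmetry gives $\overline{P(-\overline E)}=P(E)$. Consequently, in the even/odd splitting $P(E)=A(E^2)+EB(E^2)$, the polynomial $A$ has real coefficients and $B$ has purely imaginary coefficients.

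Set $\tilde G=A$ and $\tilde F=-iB$, and choose initial data $G_N=H_N=i\tilde G$ and $F_{N-1}=i\tilde F$. These data satisfy~(\ref{focusingreality}), since $\tilde G$ is monic and so $g_N=h_N=i=b$. Reversing the identity of Step 2 gives $-\lambda^2\det\Psi^{(N)}=\mathscr R$. At $\mathbf t=0$ we then have $u=i\sum_jE_j$, whose modulus is $\sum_{E\in\mathcal E^+}\Im(E)$. If the sign of $u$ comes out negative, replace $u$ by $-u$, which corresponds to using $\overline{P(\overline{E})}$ with all roots in the lower half plane.

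\textbf{Main obstacle.} The step I expect to need most care is Step 1, namely justifying that the maximum is attained at a genuine critical point. Without compactness the supremum along a single trajectory may not be reached. Working on the compact invariant level set $\mathcal S$ avoids any torus or theta-function argument. The remaining work is sign bookkeeping in Step 2, where the leading coefficient of $\tilde G$ must be exactly $1$.
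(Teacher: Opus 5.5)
Your proposal is correct and follows the paper's proof essentially step for step. Both arguments run the same way: compactness of the level set $\mathcal S$, an extremal point being a critical point so that $G_N=H_N$, the factorization over $\lambda=E^2$ with roots $\epsilon_jE_j$, Vieta on the $E^{2N-1}$ coefficient, and the all-upper-half-plane choice checked against the focusing reality conditions. The only differences are cosmetic: you use a single polynomial $P(E)$ with $P(E)P(-E)$ instead of the pair $EF_{N-1}\pm iG_N$, and you bound $\mathcal S$ pointwise via the Markov estimate rather than through orbits.
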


\begin{proof}
Fix \(\mathscr R(\lambda)\) the invariant polynomial of the given global solution. 
Let \(\mathcal S\) denote the set of all points in the $3N$-dimensional phase space of the commuting hierarchy which (i)  satisfy the focusing reality conditions and (ii)  satisfy the defining equation~(\ref{Reqn}) of the invariant polynomial \(\mathscr R(\lambda)\). Every point of $\mathcal S$ serves as initial data for a unique solution of the commuting hierarchy which preserves the reality conditions and the invariant polynomial $\mathscr{R} (\lambda),$ therefore every point of $\mathcal S$ lies on the orbit of a real solution with invariant polynomial $\mathscr{R} (\lambda).$ Conversely, the orbit of any real solution with invariant polynomial $\mathscr{R} (\lambda)$ must lie entirely on $\mathcal S,$ because the reality conditions and $\mathscr{R}(\lambda)$ are invariants of the orbit. Hence $\mathcal S$ coincides with the union of the orbits of all solutions satisfying the focusing reality conditions and having invariant polynomial $\mathscr{R}(\lambda).$ Equivalently, $\mathcal S$ coincides with the set of the initial data of all solutions satisfying the focusing reality conditions and having invariant polynomial  $\mathscr{R}(\lambda).$
The invariant polynomial condition is the common zero  set of a finite set of polynomials of the dynamical variables, and the reality conditions define a closed subspace of the phase space. Hence \(\mathcal S\) is closed. 

Moreover, the construction of the uniform  bounds in Theorem~\ref{globalfocusing}  shows that the bounds on the dynamical variables along a real orbit depend only on the invariant polynomial $\mathscr{R}(\lambda)$. Thus the bounds  are identical for every real orbit having  the given invariant polynomial $\mathscr{R}$. Since the set $\mathcal S$ is identical to the set of points on these uniformly bounded orbits, $\mathcal S$ is bounded.   Therefore, \(\mathcal S\) is a closed  bounded subset of the finite-dimensional phase space. Therefore, \(\mathcal S\) is compact. 

Consider the projection of \(\mathcal S\) onto the \(f_{N-1}\)-coordinate of the finite-dimensional phase space. Since this projection is compact, there exists a point ${\bf s}^{\max}  \in \mathcal S$ maximizing $|f_{N-1}|.$ Note that ${\bf s}^{\max}$ is not necessarily unique, but the maximum value of $|f_{N-1}|$ for points in $\mathcal S$ is unique.  Define
${\bf y}^{\max} ({\bf t})$ to be the solution of the hierarchy flows with initial data ${\bf y}^{\max} ({\bf 0}) = {\bf s}^{\max}.$ Let $f_{N-1}^{\max} ({\bf t})$ be the
$f_{N-1}$-coordinate of ${\bf y}^{\max} ({\bf t}).$ And let $u^{\max} ({\bf t}) =- i f_{N-1}^{\max} ({\bf t})$ be the global finite-gap solution of the focusing mKdV equation coming from the bounded global solution ${\bf y}^{\max} ({\bf t})$ of the hierarchy flows. Since every point of the orbit of ${\bf y}^{\max}( {\bf t})$ belongs to $\mathcal S,$ and the initial point was chosen to maximize $|f_{N-1}|$ on $\mathcal{S},$ it follows that
\begin{equation}
|u^{\max} ({\bf t} ) | \leq |u^{\max} ({\bf 0})|,
\end{equation}
for all $\bf{t} \in \mathbb{R}^{N}.$ Without loss of generality, we can assume that $u^{\max} ({\bf 0}) \geq 0,$ since the mKdV equation~(\ref{mkdv}) is invariant under the transformation $u \mapsto -u.$
Therefore, \(\mathbf t=\mathbf 0\) is the location of a global maximum and a critical point of $u^{\max}: \mathbb{R}^{N} \rightarrow \mathbb{R}$ defined by $u^{\max}({\bf t}) = -i f_{N-1}^{\max} ({\bf t}),$ with respect to all $N$ flows in the hierarchy.

Since ${\bf t} = {\bf 0}$ is a critical point of $u^{\max}({\bf t}) = -i f^{\max} ({\bf t}),$ Theorem~\ref{criticalpoints} and $g_N = h_N = b= i$ imply that
$G_N(\lambda)=H_N(\lambda)$ at ${\bf t} = {\bf 0}.$ Therefore, at $\bf{t} = \bf{0},$ the focusing reality conditions imply that the coefficients of $G_{N} (\lambda)$ are purely imaginary, viz.,
\begin{equation}
G_N(\lambda)= H_{N} (\lambda) = - \overline{ G ( \overline{\lambda} ) },
\end{equation}
and at ${\bf t} ={\bf 0}$ the invariant polynomial can be written as
\begin{equation}
\mathscr{R}(\lambda) = \lambda^2 F_{N-1}^2(\lambda) + \lambda G_{N}^2 (\lambda) = - \lambda \prod\limits_{j=1}^{2N} (\lambda - \lambda_j).
\end{equation}
If we define $\lambda=E^2,$ then we can factor this equation. 
By Definition~\ref{squareroots}, the set $\mathcal E^+ = \{E_1, \ldots, E_{2N}\}$ consists of the  square roots in the upper-half plane of the invariant roots $\lambda_j$ (recall that, in Theorem~\ref{globalfocusing}, we showed that all of the invariant roots are nonpositive). These square roots are necessarily distinct because the invariant roots themselves are distinct.
After canceling out the root at $\lambda=0,$ we obtain
\begin{equation}\label{reducedRfocusing}
E^2F_{N-1}^2(E^2) +G_N^2(E^2) =-\prod\limits_{j=1}^{2N}(E^2-E_j^2).
\end{equation}

Equation~(\ref{reducedRfocusing}) implies the factorization
\begin{equation}\label{factorfirst}
\left(E F_{N-1} (E^2) + i G_{N} (E^2) \right) \left( E F_{N-1} (E^2) - i G_{N} (E^2) \right) = -\prod\limits_{j=1}^{2N}(E^2-E_j^2).
\end{equation}
Suppose that $(E^2-E_j^2)$ were a factor of the first (or second) factor on the left-hand side of equation~(\ref{factorfirst}). That would imply that $E= \pm E_j$ were both roots of that factor and, hence, that $F(E_j^2)=G(E_j^2) =0.$ Equation~(\ref{reducedRfocusing}) would then force $E=E_j^2$ to be a double root of the invariant polynomial, which is not allowed.
Therefore, using  $g_N = h_N =b= i,$ the factorization of equation~(\ref{factorfirst}) must have the form, 
\begin{subequations}\label{factorAA}
\begin{align}
E F_{N-1} (E^2) + i  G_N(E^2) &=   -\prod\limits_{j=1}^{2N} (E - \epsilon_j E_j) \label{factorA} \\
E F_{N-1} (E^2) - i  G_N (E^2) &=   \prod\limits_{j=1}^{2N} (E +\epsilon_j E_j),\label{factorB}
\end{align}
\end{subequations}
where $\epsilon_j = \pm 1,$ for $j=1, \ldots, 2N.$ The construction guarantees that the polynomials $F_{N-1}(E^2)$ and $G_{N}(E^2)$ are well-defined by equations~(\ref{factorA}) and~(\ref{factorB}), because the factorization must have the given form. It  can be verified directly that the symmetric polynomials of even and odd degrees of the roots on each side of each equation give consistent expressions for $F_{N-1} (E^2)$ and $G_{N}(E^2)$.

Consider the coefficient of $E^{2N-1}$ in equation~(\ref{factorA}) or~(\ref{factorB}),
\begin{equation}\label{leadingsum}
f_{N-1}^{\max}({\bf 0}) = \sum\limits_{j=1}^{2N} \epsilon_j E_j,
\end{equation}
which must be purely imaginary. The maximal upper bound occurs when $\epsilon_{j} = 1,$ for $j=1, \ldots, 2N,$ viz.,
\begin{equation}
|f_{N-1}^{\max}({\bf 0})|  = | \Im \left( f_{N-1}^{\max}({\bf 0}) \right) |=  \sum\limits_{j=1}^{2N} \Im \left( E_j \right), \label{max1}
\end{equation}
provided that this factorization corresponds to initial data satisfying the focusing reality conditions.
To check that the focusing reality conditions are satisfied, solve equation~(\ref{factorAA}),
\begin{subequations}\label{solvedfactors}
\begin{align}
F_{N-1}(E^2) &= \frac{1}{2E} \left(\prod\limits_{j=1}^{2N} (E + E_j) - \prod\limits_{j=1}^{2N} (E -  E_j) \right) \\
G_{N} (E^2) &= \frac{i}{2} \left( \prod\limits_{j=1}^{2N} (E +  E_j) + \prod\limits_{j=1}^{2N} (E - E_j) \right).
\end{align}
\end{subequations}
The roots $E_j \in \mathcal E^+$ are either purely imaginary or occur in pairs $E_{j}, - \overline{E_{j}}.$ If we label the elements of $\mathcal E^+$ as $E_{2k-1}=-\overline{E_{2k}},$ for $k=1, \ldots, M,$ and $E_{k}=-\overline{E_k},$ for $k=2M+1, \ldots, 2N,$ then 
\begin{subequations}
\begin{align}
\prod_{j=1}^{2N} (E-  E_j) &= \prod_{k=1}^{M} (E^2 - (E_{2k} - \overline{E_{2k}})E - |E_{2k}|^2) \prod_{k=2M+1}^{2N} (E -  E_k), \\
\prod_{j=1}^{2N} (E+  E_j) &= \prod_{k=1}^{M} (E^2 + (E_{2k} - \overline{E_{2k}})E - |E_{2k}|^2) \prod_{k=2M+1}^{2N} (E + E_k). 
\end{align}
\end{subequations}
Therefore,
\begin{equation}
\overline{\prod_{j=1}^{2N} (\overline{E}-  E_j) }  = \prod_{j=1}^{2N} (E+  E_j).
\end{equation}
And, finally, $F_{N-1} (E^2) = -\overline{F_{N-1} (\overline{E}^2)}$ and $H_{N}(E^2)=G_{N} (E^2) = -\overline{G_{N}(\overline{E}^2)},$ showing that the focusing reality conditions are satisfied at this point in the phase space and, so, the initial data defined by the factorization in equation~(\ref{factorAA}), with $\epsilon_j = 1,$ for all $j=1, \ldots, 2N,$ lies in $\mathcal S.$

Therefore, equation~(\ref{max1}) implies that the solution $u^{\max} ({\bf t})$ satisfies
\begin{equation}
|u^{\max} ({\bf 0})| = \sum\limits_{j=1}^{2N} \Im \left( E_j \right).
\end{equation}
Hence
\begin{equation}
|u(x,t)|
\le
\sum_{E \in\mathcal E^+}\Im \left(E \right),
\end{equation}
and the bound is sharp because it is attained by the maximizing solution.
\end{proof}

\begin{remark}
Previous derivations~\cite{wrig 19, wrig 20, wrig 24} of sharp amplitude bounds for finite-gap solutions relied on global compactness of individual trajectories on invariant tori in order to guarantee that each solution attained its maximum. The present argument avoids this requirement entirely. Instead, the sharp bound is obtained by maximizing directly over the compact set of admissible initial data sharing the same invariant polynomial. Thus, only the existence of a single maximizing solution is required.
\end{remark}

\begin{remark} The proof of Theorem~\ref{sharpfocusing} uses only the invariant polynomial and the finite-dimensional hierarchy structure. The sine-Gordon equation arises as the first negative flow of the combined sG/mKdV hierarchy~\cite{gesz 03, gesz 00}, but under a different hierarchy reduction than the mKdV equation. The existence of analogous sharp bounds for finite-gap sine-Gordon solutions will be investigated elsewhere.
\end{remark}

\begin{theorem}[Sharp Upper Bound for the  Defocusing mKdV Equation]\label{sharpdefocusing}
Let
\[
y_j=f_{j-1}, \qquad
y_{N+j}=g_{j-1}, \qquad
y_{2N+j}=h_{j-1},
\qquad j=1,\ldots,N,
\]
be a compatible global solution of the \(N\) autonomous ordinary differential equations~\eqref{Fode}, with initial conditions satisfying the defocusing reality conditions given in Theorem~\ref{globaldefocusing}.
Then the corresponding solution \(u\) of the defocusing mKdV equation~(\ref{mkdv}) satisfies
\begin{equation}
|u(x,t)|
\le
\sum_{j=1}^{N} \Im \left( E_{2j} -  E_{2j-1} \right),
\end{equation}
where the elements of $\mathcal E^+$ are purely imaginary and are ordered by $0 < \Im \left( E_1 \right) < \Im \left( E_2 \right) < \ldots  < \Im \left( E_{2N} \right).$ 
Moreover, the bound is sharp, in the sense that there exists a solution which attains the upper bound.
\end{theorem}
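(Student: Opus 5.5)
The plan is to follow the proof of Theorem~\ref{sharpfocusing} step by step. The new ingredient is that the admissible class is now cut out by conditions (a)--(c) of Theorem~\ref{globaldefocusing}, and these conditions restrict which sign patterns in the factorization are allowed.

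\textbf{Step 1: compactness of the admissible set.} Fix $\mathscr R$ and let $\mathcal S$ be the set of phase-space points that satisfy (a), have invariant polynomial $\mathscr R$, and satisfy (c). The proof of Theorem~\ref{globaldefocusing} shows that (c) is preserved by all flows. It also gives bounds on the coefficients that depend only on $\mathscr R$. Hence $\mathcal S$ is a bounded union of orbits. For closedness, note that (c) uses closed intervals. A limit of points in $\mathcal S$ still has a root of $G_N$ and of $H_N$ in each interval $[\lambda_{2j},\lambda_{2j-1}]$, since these polynomials have degree $N$ with fixed leading coefficient $1$ and their roots depend continuously on the coefficients.

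\textbf{Step 2: the maximizer is a critical point.} Exactly as in the focusing case, pick $\mathbf s^{\max}\in\mathcal S$ maximizing $|f_{N-1}|$, and assume $f_{N-1}\ge 0$ there by the symmetry $u\mapsto-u$. The orbit through $\mathbf s^{\max}$ stays in $\mathcal S$, so $\mathbf t=\mathbf 0$ is a critical point. Theorem~\ref{criticalpoints} then gives $G_N=H_N$ at that point, with $b=1$.

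\textbf{Step 3: factorization.} With $\lambda=E^2$ and the root at $0$ removed, the invariant becomes
\begin{equation*}
E^2F_{N-1}^2(E^2)+G_N^2(E^2)=\prod_{j=1}^{2N}(E^2-E_j^2).
\end{equation*}
Write $E_j=i a_j$ with $0<a_1<\dots<a_{2N}$, so that $\lambda_j=-a_j^2$. The left side factors as $(EF_{N-1}+iG_N)(EF_{N-1}-iG_N)$. The distinct-roots argument used in the focusing proof then gives
\begin{equation*}
EF_{N-1}+iG_N=i\prod_{j=1}^{2N}(E-\epsilon_jE_j),\qquad \epsilon_j=\pm1 .
\end{equation*}
Comparing the coefficients of $E^{2N-1}$ yields $f_{N-1}=\sum_j\epsilon_j a_j$, which is real. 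Unlike the focusing case, every sign pattern gives real coefficients for $F_{N-1}$ and $G_N$, so condition (c) is what selects the admissible $\epsilon$'s.

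\textbf{Step 4, the main obstacle: which sign patterns satisfy (c).} Put $E=is$ with $s>0$ and $\lambda=-s^2$. Then
\begin{equation*}
G_N(-s^2)\propto \prod_j(s-\epsilon_ja_j)+\prod_j(s+\epsilon_ja_j).
\end{equation*}
So $G_N$ vanishes exactly where the rational function
\begin{equation*}
\varphi(s)=\prod_j\frac{s-\epsilon_ja_j}{s+\epsilon_ja_j}
\end{equation*}
equals $-1$. Condition (c) requires one such point in each gap $[a_{2j-1},a_{2j}]$, and this must account for all $N$ roots of $G_N$.

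I would track the sign of $\varphi$ on the complementary intervals. Each factor with $\epsilon_j=+1$ changes sign at $s=a_j$, passing through $0$. Each factor with $\epsilon_j=-1$ changes sign at $s=a_j$ through a pole. Outside the gaps, $G_N(-s^2)$ cannot vanish: there $\mathscr R(-s^2)$ has the wrong sign relative to the inequality~(\ref{rneg}). I expect this forces $\varphi$ to take the same sign on every complementary interval, so that each gap contains exactly one pair of sign changes, one zero and one pole. That is the statement $\epsilon_{2j-1}=-\epsilon_{2j}$ for every $j$. Conversely, with opposite signs in each pair, $\varphi$ runs monotonically between $0$ and $\pm\infty$ across each gap. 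It therefore crosses $-1$ in every gap, which gives all $N$ roots, and so the point lies in $\mathcal S$. The degree count (exactly $N$ roots, one per gap) is what rules out equal signs within a pair.

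\textbf{Step 5: conclusion.} Given opposite signs in each pair, $f_{N-1}=\sum_j\pm(a_{2j}-a_{2j-1})$. This is maximized by $\epsilon_{2j}=1$ and $\epsilon_{2j-1}=-1$, which gives
\begin{equation*}
\sum_{j=1}^N \Im(E_{2j}-E_{2j-1}),
\end{equation*}
and Step 4 shows that this maximizing pattern lies in $\mathcal S$, so the bound is attained.
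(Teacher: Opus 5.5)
Your Steps 1--3 and 5 match the paper's proof. The gap is in the necessity half of Step 4, the claim that condition (c) forces $\epsilon_{2j-1}=-\epsilon_{2j}$.

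The sign pattern of $\varphi$ that you propose to track carries no information about $\epsilon$. For every choice of signs, each $a_j$ is either a simple zero or a simple pole of $\varphi$, and both are sign changes. Also $\varphi(0^+)=\varphi(+\infty)=1$. Hence $\varphi>0$ on every complementary interval and $\varphi<0$ on every open gap $(a_{2j-1},a_{2j})$, whatever $\epsilon$ is. So "$\varphi$ has the same sign on the complementary intervals" holds automatically. It does not imply "one zero and one pole in each gap", because sign considerations cannot tell a zero from a pole. Your final appeal to the degree count is the right instinct, but the parity fact it relies on is never established. Separately, the monotonicity of $\varphi$ across a gap is asserted without proof; you do not need it.

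What does discriminate is the behaviour at the endpoints. As $s\to a_k$ from inside a gap, $\varphi\to 0$ if $\epsilon_k=1$ and $\varphi\to-\infty$ if $\epsilon_k=-1$. Write $A(s)=\prod_j(s-\epsilon_ja_j)$ and $B(s)=\prod_j(s+\epsilon_ja_j)$, so that $G_N(-s^2)$ is a constant multiple of $A+B=B(\varphi+1)$.

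\begin{itemize}
\item $B$ does not vanish on the open gap, so the zeros of $G_N$ there are exactly the solutions of $\varphi=-1$, with multiplicity.
\item $G_N\neq 0$ at the endpoints, since exactly one of $A(a_k)$, $B(a_k)$ vanishes.
\end{itemize}

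If $\epsilon_{2j-1}=\epsilon_{2j}$, then $\varphi+1$ has the same sign at both ends of the $j$-th gap. It tends to $1$ at both ends, or to $-\infty$ at both. So $G_N$ has an even number of zeros in that gap. But (c), combined with $\deg G_N=N$, forces exactly one simple zero in each gap, which is a contradiction. If the signs differ, $\varphi$ runs between $0$ and $-\infty$ across the gap. The intermediate value theorem then gives a zero, and the degree count gives exactly one. This also yields membership of the maximizing configuration in $\mathcal S$.

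With this lemma inserted, your route is a valid and somewhat more direct alternative to the paper's. The paper obtains $\epsilon_{2j-1}=-\epsilon_{2j}$ by a deformation $\beta_{2j-1}\to\beta_{2j}$, keeping the roots of $G_N$ trapped in the gaps. In the limit $F_{N-1}\equiv 0$, and the signs are read off from $\prod_j(E^2+\beta_{2j}^2)$. Sufficiency is then checked separately through the signs of $T(\eta)=iP(i\eta)$ at the gap endpoints. Your endpoint analysis of $\varphi$ handles both directions with one computation and avoids passing to a degenerate spectral limit.
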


\begin{proof}
Fix \(\mathscr R(\lambda)\) the invariant polynomial of the given global solution defined by Theorem~\ref{globaldefocusing}.
Let \(\mathcal S\) denote the set of all points in the $3N$-dimensional phase space of the commuting hierarchy which (i)  satisfy the defocusing reality conditions, (ii) satisfy the defining equation~(\ref{Reqn}) of the invariant polynomial \(\mathscr R(\lambda)\), and  (iii) define polynomials $G_{N}(\lambda)$ and $H_{N}(\lambda)$ which each have exactly one root in each of the intervals $[\lambda_{2j},\lambda_{2j-1}],$ for $j=1, \ldots, N.$  The condition that the degree $N$ polynomials $G_{N}(\lambda)$ and $H_{N}(\lambda)$ possess exactly one root in each closed interval $[\lambda_{2j},\lambda_{2j-1}]$ is preserved under limits because roots depend continuously on coefficients and cannot leave their respective intervals because of the reality conditions and equation~(\ref{rneg}).
 Thus,  $\mathcal S$ is a closed subset of the finite-dimensional phase space, just as in the proof of Theorem~\ref{sharpfocusing}.

Every point of $\mathcal S$ serves as initial data for a unique solution of the commuting hierarchy which preserves the conditions of Theorem~\ref{globaldefocusing} and the invariant polynomial $\mathscr{R} (\lambda),$ therefore every point of $\mathcal S$ lies on the orbit of a solution satisfying the conditions of Theorem~\ref{globaldefocusing} with invariant polynomial $\mathscr{R} (\lambda).$ Conversely, the orbit of any solution satisfying the conditions of Theorem~\ref{globaldefocusing} with invariant polynomial $\mathscr{R} (\lambda)$ must lie entirely on $\mathcal S,$ because the conditions of the theorem and $\mathscr{R}(\lambda)$ are invariants of the orbit. Hence $\mathcal S$ coincides with the union of the orbits of all solutions satisfying the conditions of Theorem~\ref{globaldefocusing} and having invariant polynomial $\mathscr{R}(\lambda).$ 
Thus, the uniform bound on the global solutions constructed in Theorem~\ref{globaldefocusing} applies to the orbits which constitute the set $\mathcal S.$ Therefore, $\mathcal S$ is a closed and bounded subset of the finite-dimensional phase space. Therefore, $\mathcal S$ is compact.

Consider the projection of \(\mathcal S\) onto the \(f_{N-1}\)-coordinate of the finite-dimensional phase space. Since this projection is compact, there exists a point ${\bf s}^{\max}  \in \mathcal S$ maximizing $|f_{N-1}|.$ Note that ${\bf s}^{\max}$ is not necessarily unique, but the maximum value of $|f_{N-1}|$ for points in $\mathcal S$ is unique.  Define ${\bf y}^{\max} ({\bf t})$ to be the solution of the hierarchy flows with initial data ${\bf y}^{\max} ({\bf 0}) = {\bf s}^{\max}.$ Let $f_{N-1}^{\max} ({\bf t})$ be the
$f_{N-1}$-coordinate of ${\bf y}^{\max} ({\bf t}).$ And let $u^{\max} ({\bf t}) = f_{N-1}^{\max} ({\bf t})$ be the global finite-gap solution of the defocusing mKdV equation coming from the bounded global solution ${\bf y}^{\max} ({\bf t})$ of the hierarchy flows. Since every point of the orbit of ${\bf y}^{\max}( {\bf t})$ belongs to $\mathcal S,$ and the initial point was chosen to maximize $|f_{N-1}|$ on $\mathcal{S},$ it follows that
\begin{equation}
|u^{\max} ({\bf t} ) | \leq |u^{\max} ({\bf 0})|,
\end{equation}
for all $\bf{t} \in \mathbb{R}^{N}.$ Without loss of generality, we can assume that $u^{\max} ({\bf 0}) \geq 0,$ since the mKdV equation~(\ref{mkdv}) is invariant under the transformation $u \mapsto -u.$
Therefore, \(\mathbf t=\mathbf 0\) is the location of a global maximum and a critical point of $u^{\max}: \mathbb{R}^{N} \rightarrow \mathbb{R}$ defined by $u^{\max}({\bf t}) =  f_{N-1}^{\max} ({\bf t}),$ with respect to all $N$ flows in the hierarchy.

Since ${\bf t} = {\bf 0}$ is a critical point of ${u}_{\max} ({\bf t}) =f^{\max}_{N-1} ({\bf t}),$ Theorem~\ref{criticalpoints} and $g_N = h_N =b= 1$ imply that $G_N(\lambda)=H_N(\lambda)$ at ${\bf t} = {\bf 0}.$ Therefore, at $\bf{t} = \bf{0},$ the invariant polynomial has the form
\begin{equation}\label{RR}
\mathscr{R}(\lambda) = \lambda^2 F_{N-1}^2(\lambda) + \lambda G_{N}^2 (\lambda) =  \lambda \prod\limits_{j=1}^{2N} (\lambda - \lambda_j).
\end{equation}
Under the defocusing reality conditions of Theorem~\ref{globaldefocusing}, the invariant polynomial has negative real roots
\begin{equation}
\lambda_{2N} = - \beta_{2N}^2, \lambda_{2N-1} = - \beta_{2N-1}^2, \ldots, \lambda_1 = - \beta_1^2,
\end{equation}
where $\beta_j >0,$ for $j=1, \ldots, 2N,$ and $\mathcal E^+ = \{i \beta_1, \ldots, i \beta_{2N}\}$ are the square roots in the upper-half plane. Setting $\lambda=E^2,$ the invariant polynomial in
equation~(\ref{RR}) reduces to 
\begin{equation}\label{reducedR}
E^2F_{N-1}^2(E^2) +G_N^2(E^2) =\prod\limits_{j=1}^{2N}(E^2+\beta_j^2).
\end{equation}
Notice that a factorization of the form
\begin{equation}
E F_{N-1} (E^2) + i G_{N}(E^2) = i (E^2 + \beta_k^2)\prod\limits_{j \neq k}^{2N} (E - i \epsilon_{j} \beta_j),
\end{equation}
where $\epsilon_{j} = \pm 1,$ for $j=1, \ldots, 2N,$  is not possible, because setting $E=i \epsilon_k \beta_k$ implies that
\begin{equation}\label{eqnep}
i \epsilon_k \beta_k F_{N-1} (-\beta_k^2) + i G_{N} (- \beta_k^2) = 0.
\end{equation}
Since $\epsilon_k = \pm 1,$ equation~(\ref{eqnep}) implies that $F_{N-1} (-\beta_k^2) = G_{N}(-\beta_k^2) = 0.$ In that case, equation~(\ref{reducedR}) shows that $\lambda= E^2= - \beta_k^2$ is a double root of the invariant polynomial $\mathscr{R}(\lambda)$, which is not permitted.

Therefore, the factorization of equation~(\ref{reducedR}) must have the form
\begin{subequations}\label{factor00}
\begin{align}
E F_{N-1} (E^2) + i G_{N}(E^2) &= i \prod\limits_{j =1}^{2N} (E - i \epsilon_{j} \beta_j), \label{factor11} \\
E F_{N-1} (E^2) - i G_{N}(E^2) &= -i\prod\limits_{j =1}^{2N} (E + i \epsilon_{j} \beta_j). \label{factor22}
\end{align}
\end{subequations}
By considering the symmetric polynomials of even and odd degrees on both sides of equations~(\ref{factor11}) and~(\ref{factor22}), we see that the coefficients of $F_{N-1}(E^2)$ and $G_{N}(E^2)$ are real, as required by the defocusing reality conditions.
However, we must still check that the roots of $G_{N}(\lambda)$ lie in the correct intervals.  

To determine the permissible signs $\epsilon_j = \pm 1$ in the factorization, we will use a deformation argument on the polynomials in equation~(\ref{factor00}). Allow the parameters $\beta_{2j-1}$ and $\beta_{2j}$ to change continuously in such a way that they remain distinct, but $\beta_{2j-1} \rightarrow \beta_{2j},$ for $j=1, \ldots, N.$ In other words, we collapse the $N$ gaps in the spectral configuration. Assume at the beginning of the deformation that the factorization~(\ref{factor00}) is permissible, viz., the roots of $G_{N}(\lambda)$ are positioned so that there is exactly one in each gap. During the deformation process, the coefficients and the roots of $G_{N}(\lambda)$ vary continuously, but the roots of $G_{N}(\lambda)$ remain trapped in the spectral gaps because if one of the roots were to hit an endpoint $\lambda_k = -\beta_k^2,$ then equation~(\ref{reducedR}) would imply that this root was also a root of $F_{N-1}(\lambda)$ and hence a double root of the invariant polynomial $\mathscr{R}(\lambda),$ which is impossible by assumption on $\mathscr{R}(\lambda).$

Passing to the limit of this deformation process, the roots of $G_{N}(\lambda)$ collapse onto $\lambda_j=-\beta_{2j-1}^2 = -\beta_{2j}^2,$ for $j=1, \ldots, N.$ Moreover, these collapsed root pairs are distinct from each other, viz., $\beta_{2j} \neq \beta_{2k},$ for all $j \neq k.$
Hence, in the limit, equation~(\ref{reducedR}) implies that
\begin{equation}
E^2F_{N-1}^2(E^2) +\prod\limits_{j=1}^{N} (E^2+\beta_{2j}^2)^2 =\prod\limits_{j=1}^{N}(E^2+\beta_{2j}^2)^2 \Rightarrow F_{N-1}(E^2) \equiv 0.
\end{equation}
So, in the limit of this deformation process, equation~(\ref{factor11}) and $F_{N-1} (E^2) \equiv 0$ imply
\begin{equation}\label{quadraticform}
  \prod\limits_{j=1}^{N} \left( E^2 + \beta_{2j}^2 \right) =  \prod\limits_{j=1}^{N} (E - i \epsilon_{2j-1} \beta_{2j})(E - i \epsilon_{2j} \beta_{2j}).
\end{equation}
Now suppose that for some value of $j,$ $\epsilon_{2j-1}=\epsilon_{2j}$ in equation~(\ref{quadraticform}). Then the right-hand side of equation~(\ref{quadraticform}) has a double root at $E= i \epsilon_{2j} \beta_{2j}.$ But the factor $E^2+\beta_{2j}^2$ on the left-hand side contributes roots $E=\pm i \beta_{2j},$ which cannot both be equal to $E=i \epsilon_{2j} \beta_{2j},$ since $\beta_{2j} >0.$ Therefore, $\beta_{2j} = \beta_{2k}$ for some  $k \neq j,$ which is not possible in our construction. Thus,  the  permissible factorizations satisfy $\epsilon_{2j-1} = - \epsilon_{2j}.$

Therefore, equating coefficients of $E^{2N-1}$ in the original factorization (before the deformation) given by equation~(\ref{factor11}) (or equation~(\ref{factor22})), we conclude that the maximal amplitude is given by
\begin{equation}\label{defocusingumax}
|u^{\max} ({\bf 0})| =|f_{N-1}^{\max}({\bf 0})| = \sum\limits_{j=1}^{N} \epsilon_{2j} (\beta_{2j} - \beta_{2j-1}).
\end{equation} 
for some choice of $\epsilon_{2j} = \pm 1,$ for $j=1, \ldots, N.$
The choice $\epsilon_{2j} =1,$ for $j=1, \ldots, N,$ maximizes the sum in equation~(\ref{defocusingumax}), because $\beta_{2j-1} < \beta_{2j}$. 

Therefore, to complete the proof,  it is sufficient to show that, if $\epsilon_{2j} = - \epsilon_{2j-1} = 1,$ for $j=1, \ldots, N,$  then the initial data for the coefficients of $F_{N-1}(\lambda)$ and  $G_{N}(\lambda)$ defined by the factorization in equation~(\ref{factor00}) produces a 
$G_{N}(\lambda)$ which has exactly one root in each of the $N$ finite gaps. Setting $\epsilon_{2j}=-\epsilon_{2j-1}=1$ in equation~(\ref{factor00}), define the polynomial
\begin{equation}
P(E) =E F_{N-1} (E^2) + i G_{N}(E^2) = i \prod\limits_{j =1}^{N} (E - i  \beta_{2j})(E+i \beta_{2j-1} ),
\end{equation}
where $0 < \beta_{1} < \beta_{2} < \ldots < \beta_{2N}.$ Then
\begin{equation}
G_{N}(E^2) = \frac{1}{2i} ( P(E) + P(-E) )
\end{equation}
because $G_{N}(E)$ is completely determined by the even powers of $P(E).$ 
Then, for $\eta \in \mathbb{R},$ define
\begin{equation}
T(\eta) = i P(i \eta) = \prod\limits_{j=1}^{N} (\eta - \beta_{2j})(\eta + \beta_{2j-1}),
\end{equation} 
so that $T(\eta)$ is an even polynomial with real coefficients and simple roots at 
\begin{equation}\label{signseq}
-\beta_{2N-1} <  \ldots < -\beta_{1} < \beta_{2} < \beta_{4} < \ldots < \beta_{2N}.
\end{equation}
In particular, the sign of $T(\eta)$ is positive as $\eta \rightarrow \pm \infty$ and the sign alternates between the consecutive roots.
Therefore,
\begin{equation}
G_{N}(-\eta^2) =-\frac{1}{2} (T(\eta) + T(-\eta)),
\end{equation}
and
\begin{equation}
G_{N} (-\beta_{2j}^2) G_{N} (-\beta^2_{2j-1}) =\frac{1}{4} T(-\beta_{2j}) T(\beta_{2j-1}) <0,
\end{equation}
because  $T(\beta_{2j})=T(-\beta_{2j-1})=0$ and
\begin{subequations}
\begin{align}
\operatorname{sgn} T(-\beta_{2j})
&=
(-1)^{N+j},\\
\operatorname{sgn} T(\beta_{2j-1})
&=
(-1)^{N+j+1},
\end{align}
\end{subequations}
for $j=1, \ldots, N.$
Thus,  $G_{N} (\lambda)$ changes sign at the endpoints of each interval $(-\beta^2_{2j},  -\beta^2_{2j-1}),$ for $j=1, \ldots, N,$ and the intermediate value theorem implies that $G_{N}(\lambda)$ has at least one root in each of the $N$ finite gaps. Since $G_{N}(\lambda)$ has degree $N$, we conclude that there
is exactly one root of $G_{N}(\lambda)$ in each of the required intervals with this initial data. 

Therefore, initial data specified in the factorization of equation~(\ref{factor00}) with $\epsilon_{2j}=-\epsilon_{2j-1}=1,$ for $j=1, \ldots, N,$ satisfy all the conditions of Theorem~\ref{globaldefocusing}. The corresponding solution  of the defocusing mKdV equation~(\ref{mkdv}) is the maximizing solution $u^{\max} ({\bf t}),$ which
satisfies equation~(\ref{defocusingumax}).
Hence
\begin{equation}
|u(x,t)|
\le
\sum\limits_{j=1}^{N} \left( \beta_{2j} - \beta_{2j-1} \right) = \sum\limits_{j=1}^{N} \Im \left( E_{2j} -  E_{2j-1} \right),
\end{equation}
and the bound is sharp because it is attained by the maximizing solution.
\end{proof}

\section{Soliton Examples}
\subsection{Focusing Case}
For the focusing mKdV equation~(\ref{mkdv}), the $N=1$ polynomial ansatz is
\begin{subequations}\label{genus1ansatz}
\begin{align}
F_{0} (\lambda) &=  i u, \\
G_{1} (\lambda) &= i (\lambda - \mu), \\
H_{1} (\lambda) &= i (\lambda - \overline{\mu}),
\end{align}
\end{subequations}
where $u \in \mathbb{R}$ and $\mu \in \mathbb{C}.$
The invariant spectral curve is
\begin{equation}\label{focusingcurvesoliton}
\lambda^2 u^2 + \lambda (\lambda - \mu) (\lambda - \overline{\mu}) = \lambda (\lambda - \lambda_1) (\lambda - \lambda_2).
\end{equation}
Assuming $\lambda_1 \neq \lambda_2,$ substitution of $\lambda=\lambda_1$ or $\lambda=\lambda_2$ into equation~(\ref{focusingcurvesoliton}) shows that neither $\lambda_1$ nor $\lambda_2$ can be a positive real number. Assuming that $\lambda_1$ and  $\lambda_2$ are distinct and nonzero, they are either two distinct negative numbers or a complex-conjugate pair. We can use equation~(\ref{focusingcurvesoliton}) to obtain a quadratic equation for $\mu.$ The explicit solution is
\begin{equation}
\mu = \frac{1}{2} \left( u^2 + s_1 \pm i \sqrt{4 s_2 - (u^2+s_1)^2} \right),
\end{equation}
where $s_1 = \lambda_1 +\lambda_2$ and $s_2 = \lambda_1 \lambda_2.$ The focusing reality condition requires that the discriminant of the quadratic polynomial in $\mu$ be nonpositive so that $\mu$ exists in a complex-conjugate pair, i.e.,
\begin{equation}
(u^2+s_1)^2 - 4 s_2 \leq 0  
\end{equation}
or, equivalently,
\begin{equation}
-s_1 - 2 \sqrt{s_2} \leq u^2 \leq -s_1 + 2 \sqrt{s_2}.
\end{equation}
Using $\lambda_1 = E_1^2$ and $\lambda_2 = E_2^2,$ where either $E_1 = i \beta, E_2 = i \gamma,$ or $E_1 = \alpha + i \beta, E_2 = -\alpha + i \beta,$ and $\beta, \gamma >0,$ we obtain $s_1 = \lambda_1 +\lambda_2=E_1^2 + \overline{E_1}^2$ and $s_2 = |E_1|^4.$ 
Thus,
\begin{equation}
u^2 \leq -E_1^2 -\overline{E_1}^2 + 2|E_1|^2 = - \left(E_1 - \overline{E_1} \right)^2.
\end{equation} 
Therefore, $|u| \leq \beta + \gamma$ (for two negative invariant roots) or $|u| \leq 2 \beta$ (for a complex-conjugate pair of invariant roots), as predicted by Theorem~\ref{sharpfocusing}.

The dynamical equation for $u$ is obtained by substituting equation~(\ref{genus1ansatz}) into the $N=1$ form of the $x$-flow  equation~(\ref{genus0xeqnsA}), viz.,
\begin{equation}\label{umu}
u_x = -i \left(\mu - \overline{\mu} \right).
\end{equation} 
In the degenerate case $E_1 = E_2 = i \beta,$ solving equation~(\ref{umu}) produces the traveling-wave profile of the soliton solution of the focusing mKdV equation~(\ref{mkdv}), 
\begin{equation}
u(x) = 2 \beta \sech \left( 2 \beta (x-x_0) \right).
\end{equation}

In the degenerate limit, the one-phase elliptic solution degenerates into the classical one-soliton, which is a homoclinic orbit connecting the equilibrium $u=0$ to itself. The upper turning point of the elliptic solution persists in the limit as the peak of the homoclinic orbit. Consequently, the soliton continues to attain the same sharp amplitude $2 \beta$ predicted by Theorem~\ref{sharpfocusing}.

\subsection{Defocusing Case}
For the defocusing mKdV equation, the $N=1$ polynomial ansatz is
\begin{subequations}
\begin{align}
F_{0} (\lambda) &=   u, \\
G_{1} (\lambda) &=  (\lambda - \mu_1), \\
H_{1} (\lambda) &=  (\lambda - \mu_2),
\end{align}
\end{subequations}
where $u, \mu_1, \mu_2 \in \mathbb{R}.$
The invariant spectral polynomial is
\begin{equation}\label{defocusingcurvesoliton}
\lambda^2 u^2 + \lambda (\lambda - \mu_1) (\lambda - \mu_2) = \lambda (\lambda - \lambda_1) (\lambda - \lambda_2).
\end{equation}
We can use equation~(\ref{defocusingcurvesoliton}) to obtain a quadratic equation with roots $\mu=\mu_1, \mu_2.$ The explicit solution is
\begin{equation}
\mu = \frac{1}{2} \left( u^2 + s_1 \pm  \sqrt{(u^2+s_1)^2-4 s_2} \right),
\end{equation}
where $s_1 = \lambda_1 +\lambda_2$ and $s_2 = \lambda_1 \lambda_2.$ The defocusing reality condition requires  that the discriminant of the quadratic polynomial be nonnegative so that $\mu \in \mathbb{R},$ viz., 
\begin{equation}
(u^2+s_1)^2 - 4 s_2 \geq 0, 
\end{equation}
which is equivalent, for bounded solutions, to
\begin{equation}\label{u2}
u^2 \leq -s_1 - 2 \sqrt{s_2}.
\end{equation}
To find bounded solutions satisfying equation~(\ref{u2}), set $\lambda_2 = -\beta_2^2$ and $\lambda_1=-\beta_1^2,$ where $0 < \beta_1 < \beta_2,$ so that
\begin{equation}
u^2 \leq (\beta_2-\beta_1)^2.
\end{equation}
Now choose any real initial $u$ such that $0< u^2 \leq (\beta_2-\beta_1)^2,$ we will show that the initial data satisfies the root conditions of Theorem~\ref{globaldefocusing}. The roots we need to check are $\mu_1$ and $\mu_2$ which satisfy the quadratic equation,
\begin{equation}
Q(\mu) = \mu^2- (u^2-\beta_1^2-\beta_2^2) \mu + \beta_1^2\beta_2^2 =0.
\end{equation}  
Now $Q(-\beta_1^2) = \beta_1^2 u^2 > 0$ and $Q(-\beta_2^2) = \beta_2^2 u^2 > 0.$ The minimum of $Q(\mu)$ occurs at 
\begin{equation}
\mu^* = \frac{1}{2} (u^2 - \beta_1^2 - \beta_2^2) > -\frac{1}{2} (\beta_1^2 + \beta_2^2) > -\beta_2^2 .
\end{equation}
Also,
\begin{equation}
\mu^* = \frac{1}{2} (u^2 - \beta_1^2 -\beta_2^2) \leq \frac{1}{2} ( (\beta_2 - \beta_1)^2 - \beta_1^2 -\beta_2^2) = -\beta_1 \beta_2 < - \beta_1^2.
\end{equation}
Therefore, $-\beta_2^2 < \mu^* < -\beta_1^2.$ Since the quadratic $Q(\mu)$ has nonnegative discriminant, is positive at the endpoints of $[-\beta_2^2,-\beta_1^2],$ and its minimum occurs in the interior of the interval, both real roots of $Q(\mu)=0$ must also lie inside the interval $[\lambda_2, \lambda_1].$ Moreover, the bound on $u$ is
\begin{equation}
|u| \leq \beta_2 - \beta_1.
\end{equation}
Therefore, the genus-one  solution of the defocusing mKdV equation satisfies the hypotheses of Theorem~\ref{globaldefocusing} and the sharp amplitude bound of Theorem~\ref{sharpdefocusing}.

Solving the dynamical equation for $u,$ 
\begin{equation}
u_x =  \mu_2 - \mu_1, 
\end{equation} 
in the degenerate case, where $\beta_1=0$ and $\beta_2=\beta,$ gives the traveling-wave profile of the kink solution of the defocusing mKdV equation~(\ref{mkdv}), 
\begin{equation}
u(x) = \beta \tanh \left(  \beta (x-x_0) \right).
\end{equation}
In the degenerate limit, the one-phase bounded elliptic solution degenerates into the classical kink profile, which is a heteroclinic orbit connecting the degenerate equilibria. 
In the  limit, the turning points of the elliptic solution are pushed to the asymptotic equilibria $u= \pm \beta,$ which are approached by the heteroclinic orbit as $x \rightarrow \pm \infty.$ Consequently, the kink continues to obey the same amplitude bound predicted by Theorem~\ref{sharpdefocusing}. But it is the degenerate equilibrium solutions at $u=\pm \beta$ connected by the kink that attain the sharp amplitude bound. 

\section{Conclusion}

In this paper a direct finite-dimensional proof of a sharp upper bound for the amplitudes of finite-gap solutions of the modified Korteweg–de Vries equation is established. The argument is based entirely on commuting polynomial flows, invariant spectral polynomials, and elementary algebraic properties of the associated dynamical variables. The critical-point analysis leads to a natural factorization of the invariant polynomial in the maximizing configuration, allowing the sharp upper bound to be obtained without the use of theta functions. The resulting amplitude formula is expressed solely in terms of the upper-half-plane square roots of the roots of the invariant polynomial and is shown to be sharp through the explicit construction of a maximizing configuration. The proof of the sharp amplitude formula only requires the existence of a  maximizing solution and avoids the need to show that every bounded finite-gap trajectory attains its maximum. This provides a streamlined finite-dimensional framework for amplitude optimization in integrable systems associated with $\mathfrak{sl} (2, \mathbb{C})$ loop algebras.

\end{document}